%% file: main.tex
\documentclass[journal,twoside,web]{ieeecolor}
\usepackage{generic}
\usepackage{amsmath,amsfonts,amssymb}  
\usepackage{mathtools}
\usepackage{graphicx}
\usepackage{epsfig}
\usepackage{accents}
\usepackage{booktabs}

\usepackage{cite}
\usepackage{hyperref}
\hypersetup{hidelinks=true}
\usepackage{textcomp}
\usepackage{balance}
\usepackage{picinpar}
\usepackage{url}
\usepackage{flushend}
\usepackage{colortbl}
\usepackage{placeins}
\usepackage{xcolor}
\usepackage{svg}
\usepackage{soul}
\usepackage{multirow}
\usepackage{tikz}
\usetikzlibrary{shapes,shapes.geometric,arrows,fit,calc,positioning,automata}
\usepackage{verbatim}
\usepackage{pifont}
\usepackage{alltt}
\usepackage{enumerate}
\usepackage{siunitx}
\usepackage{breakurl}
\usepackage{epstopdf}
\usepackage{pbox}
\usepackage{float}
\usepackage{lipsum}
\let\IEEEproof\proof
\let\endIEEEproof\endproof
\let\proof\relax
\let\endproof\relax

\usepackage{amsthm}
\let\proof\IEEEproof
\let\endproof\endIEEEproof
\theoremstyle{plain}
\newtheorem{thm}{Theorem}
\newtheorem{lem}{Lemma}
\newtheorem{cor}{Corollary}

\theoremstyle{definition}
\newtheorem{assum}{Assumption}

\theoremstyle{remark}
\newtheorem{rem}{Remark}
\usepackage{orcidlink}
\def\BibTeX{{\rm B\kern-.05em{\sc i\kern-.025em b}\kern-.08em
    T\kern-.1667em\lower.7ex\hbox{E}\kern-.125emX}}
\begin{document}
\title{Output Feedback Adaptive Performance Control}
\author{Panagiotis S. Trakas$^{\orcidlink{0000-0002-6064-7370}}$ and Charalampos P. Bechlioulis$^{\orcidlink{0000-0001-9850-2540}}$, \IEEEmembership{Senior Member, IEEE}
\thanks{This work was supported by the
Hellenic Foundation for Research and Innovation (H.F.R.I.) under the second call for research projects to support post-doctoral researchers (HFRI-PD19-370).}
\thanks{P. S. Trakas is with the Department of Information Technology and Electrical Engineering, ETH Zurich, Switzerland. C. P. Bechlioulis is with the Department of Electrical and Computer Engineering, University of Patras, Greece. E-mails: {\tt\small ptrakas@control.ee.ethz.ch, chmpechl@upatras.gr.}}
}
\maketitle
\begin{abstract}
In this paper, we consider uncertain high-order nonlinear systems performing dynamic tracking tasks under hard actuator constraints, where only the output error is available for measurement, while the system states and the desired trajectory derivatives are unavailable for feedback. We propose a robust output-feedback controller that guarantees adaptive performance specifications in this framework. The proposed scheme employs a novel Prescribed Performance Observer (PPO) with dynamic gains, which enhances estimation accuracy while avoiding large fixed observer gains. In addition, we introduce an adaptive mechanism that dynamically adjusts the output performance specifications according to the actuator limitations, ensuring bounded closed-loop signals. We establish a separation principle showing that the output-feedback scheme recovers the performance of its state-feedback counterpart. Comparative simulations demonstrate accurate tracking and
smoother applied control under actuator limitations,
uncertainties, and measurement noise.
\end{abstract}

\begin{IEEEkeywords}
Prescribed performance control; Nonlinear output regulation; Robust control; Input constraints; Feedback constraints. 
\end{IEEEkeywords}

\section{Introduction}
\IEEEPARstart{T}{he} control of nonlinear systems in the presence of input constraints and limited feedback remains a fundamental challenge in modern control theory. Actuator limitations, inherent in every physical system, often lead to degraded performance or instability, especially in uncertain nonlinear systems~\cite{survey}. At the same time, feedback limitations, where only system outputs are measurable, aggravate these issues, necessitating the use of state estimators to reconstruct unmeasured states~\cite{hgo8}. Moreover, certain applications concern dynamic tasks where the desired trajectory is not a priori known for all time, as an explicit function of time, but is continuously measured. Such scenarios involve tracking of moving targets, e.g., in autonomous vehicles or robotic systems, where the desired trajectory, i.e., the target position, is obtained at each time instant via measuring devices. As this trajectory is unknown beforehand, its derivatives cannot be exploited for feedback, and thus the high-order derivatives of the tracking error are unavailable. These practical constraints combined with the requirement for a closed-loop system that meets certain user-defined performance specifications regarding both transient and steady-state response, make the control design particularly challenging.

High-gain observers (HGOs) \cite{hgo8,atasi,hgo} are a cornerstone of output-feedback control for nonlinear systems, offering robustness to model uncertainties and achieving closed-loop performance comparable to state-feedback designs when observer gains are sufficiently high. However, HGOs amplify measurement noise, particularly in steady state, and are prone to the peaking phenomenon during transient, which can destabilize the closed-loop system (see, e.g., \cite{khalil, prasov, freidovich, ran}). To address these issues, various advancements to the standard HGO have been proposed \cite{Tarbouriech}. Adaptive HGOs \cite{c11,ahrens} dynamically adjust observer gains to balance noise sensitivity and estimation accuracy, while filtered HGOs \cite{fhgo,mousavi} introduce mechanisms to attenuate noise amplification; however, they often incur computational overhead, making real-time implementation challenging. Nonlinear output injection strategies \cite{astolfiA,ball} further refine the observer response to noise and disturbances. In \cite{astolfi}, a HGO with a reduced gain order of power 2 (compared to the standard \( n \)) was introduced, improving noise rejection, but at the expense of increasing the observer state dimension from \( n \) to \( 2n-2 \). Although this approach mitigates noise amplification, the peaking phenomenon remains unresolved. An alternative solution proposed in \cite{khalil}, involves cascading lower-dimensional observers with saturation functions applied between them. The cascade HGO reduces the observer gain order to \( k \) (from the standard \( k^n \)), offering a numerical advantage over conventional HGOs. However, as it was shown in \cite{khalil}, it inherits the main issues of the standard HGO and demonstrates similar performance attributes and robustness to measurement noise in feedback control applications.

While the aforementioned approaches improve robustness to noise or reduce peaking, they often require introducing additional parameters, increasing the complexity of the observer design and its implementation in an output-feedback control framework. Moreover, such methods often assume linear or constant input gains, which do not reflect the nonlinear and state-dependent characteristics encountered in systems like robot manipulators or aerial vehicles \cite{guo, tac23}. Extended state observers (ESOs) provide an alternative by simultaneously estimating system states and lumped uncertainty, including external forces and unmodeled dynamics \cite{eso}. These observers have found success in active disturbance rejection control (ADRC) in systems with partial state measurements \cite{ran}. Nonetheless, ESO designs also face challenges with peaking, sensitivity to noise, and reliance on a matching condition between the nominal and actual control gains \cite{freidovich, krstic, tac24}. Although methods like cascading ESOs with saturation mechanisms have been proposed to manage internal variables, they often depend on heuristic selection of observer gains, lacking a versatile framework for employing distinct nonlinear gains, while ensuring convergence and preserving performance guarantees \cite{c12}.
\par Additionally, imposing transient and steady-state performance specifications is essential in practical control systems. However, the related literature on output-feedback control remains in premature stages. Approximation-free methodologies, such as prescribed performance control (PPC) \cite{bechlaut} and funnel control (FC) \cite{funnelc}, have been developed to ensure that the tracking error remains within a predefined performance funnel, satisfying user-defined output performance requirements (see also notable works such as \cite{bechltac,bergerF,bergerIC,bechlaut9}). The authors in \cite{berger}, introduced a
dynamic output-feedback funnel controller for
minimum phase systems with a relative degree of two. While funnel-based output feedback methods \cite{funnel, dimanidis} provide semiglobal solutions to the output-feedback tracking problem with predefined transient and steady-state bounds, they typically involve extensive gain tuning and fail to explicitly address issues such as saturation effects and estimation errors. In particular, input saturation limits are selected heuristically to mitigate the peaking phenomenon associated with HGOs, without adequately accounting for the hard input constraints imposed by actuator limitations. However, the coupling between the observer and plant dynamics can result in significant performance degradation, leading to tracking errors that exceed prescribed bounds \cite{dimanidis,funnel}. This occurs both during transient, due to the peaking phenomenon, and in the steady-state, as a result of noise amplification inherent in observer-based approaches. An adaptive output-feedback funnel controller for uncertain nonlinear systems was recently proposed in \cite{c5}, achieving output tracking with prescribed transient behavior. While this controller provides global results, it relies on assumptions such as growth bounds on the system's drift term and constant control input gain. Additionally, the approach focuses on a specific class of systems where the nonlinear terms depend solely on the measured system output, thereby neglecting the couplings between unknown nonlinearities and the unmeasured states, which constitutes a challenging aspect in output feedback control.

Achieving a unified solution  that combines robustness, high-performance control, and rigorous handling of actuator constraints for uncertain nonlinear systems with feedback limitations, remains an open problem. To this end, we propose a robust output-feedback control scheme for input-constrained nonlinear systems ensuring flexible performance specifications building on our previous work on the adaptive performance control (APC) \cite{cdc22,smc,ratesiso}. APC has emerged as an extension of the PPC method, for addressing both input and output constraints in uncertain nonlinear systems. Within the APC framework the soft performance constraints are dynamically adjusted in real time to address the strict input limitations imposed by the actuators of the system, ensuring a feasible control input while achieving the best feasible performance respecting these constraints. In this paper, we extend APC by designing a control scheme that dynamically adjusts the convergence rate of the output error based on the available control effort. This mechanism enhances system performance during multiple transient phases that may arise due to actuator saturation, ensuring exponential convergence while taking into account the hard input constraints, achieving the best feasible performance specifications within the physical limitations of the actuators. Furthermore, we design a novel nonlinear observer, named prescribed performance observer (PPO), which extends the standard High-Gain Observer (HGO) by incorporating time and error dependent dynamic gains. Notably, the PPO retains the computational simplicity of standard HGO. Unlike conventional HGOs with fixed gain coefficients, the
PPO adjusts its injection coefficients dynamically according
to time and the output-estimation error. Although the
time-dependent coefficients decrease exponentially fast, the transformed injection results in a steady-state differential gain of order $O(n)$, as in a classical HGO. Finally, the PPO is integrated into a robust APC control architecture that explicitly accounts for input constraints, including amplitude and rate limitations of the control signal. For clarity and ease of presentation, the theoretical analysis is conducted for uncertain SISO nonlinear systems in the normal form. Nevertheless, the proposed methodology can be readily extended to feedback linearizable MIMO systems that can be transformed into the Byrnes-Isidori normal form \cite{isidorib}, broadening its applicability to more complex control scenarios. The main contributions of this work are summarized as follows:
\begin{itemize}
    \item We introduce the PPO, a novel nonlinear observer with
time and error-dependent injection gains that provides fast
transient error correction without requiring plant-model
information, while retaining the computational simplicity of
a standard HGO \cite{khalil,freidovich}.
     \item  Contrary to the related literature on APC \cite{smc,ratesiso,bergerIC,cdc22,acc}, we propose an input-dependent convergence rate function, that enables the system to recover the prescribed performance specifications while explicitly leveraging the available control effort during the transient period, thereby accelerating convergence to the predefined steady-state set under hard input constraints.
     \item We develop a unified output-feedback framework that integrates the PPO with the APC method under simultaneous hard amplitude and rate constraints. Unlike our earlier state-feedback APC designs \cite{cdc22,smc,ratesiso}, and unlike funnel-based output-feedback controllers with virtual (tunable) input saturation limits \cite{funnel,dimanidis,acc}, the proposed scheme provides feasibility conditions that guarantee bounded closed-loop signals as well as prescribed performance attributes under hard input constraints and feedback limitations.
\end{itemize}
The remainder of the paper is structured as follows. Section \ref{prel} defines the problem addressed in this work and presents the necessary preliminary knowledge. Section~\ref{stfeed} develops the
state-feedback controller, while Section~\ref{PPO} introduces
the PPO. Section~\ref{main} integrates the PPO with the APC
scheme and establishes recovery of the state-feedback
performance under output feedback. Section~\ref{simsec}
presents the simulation results, and
Section~\ref{conclusio} concludes the paper.
\section{Problem statement and Preliminaries}\label{prel}
We consider a nonlinear system in the normal form \cite{isidorib}:
\begin{equation}\label{eq:system}
\begin{split}
\dot{z} &= f_0(x, z, d) \\
\dot{x} &= Ax + B[f(x, z, d) + g(x,z,d)u] \\
y &= C^Tx 
\end{split}
\end{equation}
\noindent where $z \in \mathbb{R}^m$ and $x \in \mathbb{R}^n$ are the system states, $y \in \mathbb{R}$ is the measured output and $d \in \mathbb{R}^{\iota}$ is a piecewise continuous and bounded term denoting disturbances. The control input \( u \) is subject to both amplitude and rate constraints imposed by the actuator capabilities. Specifically, the amplitude of \( u \) is restricted to the compact set \( \mathcal{U} \coloneqq [-\Bar{u}, \Bar{u}] \), where \( \Bar{u} > 0 \) is a known constant. Additionally, the rate of change of \( u \), denoted by \( \dot{u} \), is limited to the compact set \( \mathcal{R} \coloneqq [-\Bar{r}, \Bar{r}] \), with \( \Bar{r} > 0 \) as a known constant. The system nonlinearities $f_0:\mathbb{R}^n \times \mathbb{R}^m \times \mathbb{R}^\iota \rightarrow \mathbb{R}^m$ and $f,g: \mathbb{R}^n \times \mathbb{R}^m \times \mathbb{R}^\iota \rightarrow \mathbb{R}$ are unknown, locally Lipschitz nonlinear functions. The matrices $A \in \mathbb{R}^{n\times n},~B \in \mathbb{R}^{n}$ and $C \in \mathbb{R}^{n}$ are given by:
\begin{align*}
    A \coloneqq \begin{bmatrix}
    \textbf{0}_{n-1} & \textbf{\textit{I}}_{n-1} \\ 0 & \textbf{0}_{n-1}^T
\end{bmatrix},~ B = \begin{bmatrix}
    \textbf{0}_{n-1} \\ 1
\end{bmatrix},~C \coloneqq \begin{bmatrix}
    1 \\ \textbf{0}_{n-1}
\end{bmatrix}
\end{align*}
with \(\textbf{\textit{I}}_{n-1}\) and \(\textbf{0}_{n-1}\) denoting the \((n-1) \times (n-1)\) identity matrix and a column vector of length \(n-1\) containing only zeros, respectively. Finally, consider a measurable reference trajectory $y_r(t) \in \mathbb{R}$ and let us define the vector $x_d(t) \coloneqq \begin{bmatrix} y_r(t), &\left(\frac{d}{dt}\right)y_r(t), & \cdots &, \left(\frac{d}{dt}\right)^{n-1}y_r(t) \end{bmatrix}^T$, containing the reference trajectory $y_r$ and its derivatives up to order $n-1$, and the state tracking error $e(t) \coloneqq x(t) - x_d(t)$.

The objective of this work is to design a dynamic output-feedback controller for system \eqref{eq:system} such that:
\begin{itemize}
    \item all closed-loop signals remain uniformly ultimately bounded;
    \item the tracking error \( e_1(t) = y(t) - y_r(t) \) meets adaptive performance specifications, according to the actuation and feedback limitations;
    \item in the absence of input saturation, $e_1(t)$ converges exponentially to a predefined small neighborhood of the origin with a prescribed minimum convergence rate.
\end{itemize} 
The following assumptions are made throughout the paper.
\begin{assum}\label{ass1}
   The sign of the unknown function $g(x,z,d)$ is known and there exists a positive constant $g^*$ such that $\inf_{(x,z,d) \in \mathbb{R}^{n+m+\iota}}\{|g(x,z,d)|\} = g^*$. Without loss of generality it is assumed that $g(x,z,d)$ is positive.
\end{assum}

\begin{assum}\label{ass2}
The initial plant state satisfies $(z(0),x(0))\in\mathcal Z_0\times\mathcal X_0$, where \(\mathcal Z_0\subset\mathbb R^m\) and
\(\mathcal X_0\subset\mathbb R^n\) are known compact
sets. Moreover, the internal dynamics of system
\eqref{eq:system} is bounded-input-bounded-state stable with
respect to \(x\) and \(d\). In particular, there exist a
continuously differentiable, radially unbounded, positive
definite function \(V_0:\mathbb R^m\to\mathbb R_{\geq0}\) and a
nonnegative continuous function
\(\varphi:\mathbb R^n\times\mathbb R^\iota\to\mathbb R_{\geq0}\)
such that
\[
\frac{\partial V_0}{\partial z}(z)f_0(x,z,d)\leq0
\quad\text{whenever}\quad
\|z\|\geq\varphi(x,d)
\]
for all \(x\in\mathbb R^n\), \(z\in\mathbb R^m\), and
\(d\in\mathbb R^\iota\).
\end{assum}
\begin{assum}\label{ass3}
    The reference trajectory \(y_r(t)\), is a known and bounded function of time as well as continuously differentiable up to order \(n\), with unknown bounded derivatives. Specifically, there exists an unknown compact set $\mathcal{Y}$ where the reference vector $x_d(t)$ strictly lies in, i.e., $x_d(t) \in \mathcal{Y} \subset \mathbb{R}^{n},~ \forall t \geq 0$.
\end{assum}
\begin{rem}\label{assumptions}
    Assumption \ref{ass1} guarantees the controllability of system \eqref{eq:system}, making the proposed method applicable to uncertain nonlinear systems with state-dependent control gains, provided that only the control direction is known. This is a notable distinction from the majority of existing literature \cite{c5,funnel,berger,ran}, which typically assumes either known or state-independent control input gain $g(\cdot)$. Furthermore, Assumption \ref{ass2} establishes that the system is minimum-phase, which is common in output-feedback control framework \cite{freidovich,c5,funnel,dimanidis}, as it decouples the internal dynamics from the tracking control problem. Finally, Assumption \ref{ass3}, states that only the reference trajectory $y_r(t)$ is directly measurable. This is particularly relevant for applications such as autonomous vehicles, where the reference trajectory (e.g., target position) is continuously determined in real-time. Relaxing Assumptions \ref{ass1} and \ref{ass2}, addressing unknown control directions and nonminimum-phase dynamics, is left open for future research.

\end{rem}
\subsection{Notations}\label{notations}
For clarity and consistency, we adopt the following notation and definitions throughout the paper.
\begin{itemize}
    \item \(\mathbb{R}\), \(\mathbb{R}_+\), \(\mathbb{R}^*_+\), \(\mathbb{N}\) denote the set of real numbers, non-negative real numbers, positive real numbers, and natural numbers, respectively.  
    \item  \(|\cdot|\), \(\|\cdot\|\), \(\|\cdot\|_\infty\) denote the absolute value (\(\mathcal{L}_1\) norm) of a scalar ( vector), the spectral (Euclidean) norm, and the infinity norm of a matrix (vector), respectively. 
    \item $\lambda_{\max}(A)$, $\lambda_{\min}(A)$ denote the maximum and minimum eigenvalue of the matrix $A$, respectively.
    \item A continuous function \(f: [0, a] \to \mathbb{R}_+\) belongs to class \(\mathcal{K}\) if \(f(0) = 0\) and \(f\) is strictly increasing. A function \(f\) belongs to class \(\mathcal{K}_\infty\) if \(f \in \mathcal{K}\) with \(a = \infty\) and \(\lim_{t \to \infty} f(t) = \infty\). \item For a differentiable function \( f(\chi) \), the prime symbol (\('\)) represents differentiation with respect to \(\chi\), i.e., \( f'(\chi) = \frac{d f(\chi)}{d\chi} \).
    \item We define a differentiable saturation function as:  
\begin{align*}
      \mathrm{sat}_{\sigma}(\chi) \coloneqq 
  \begin{cases} 
  \chi & \text{if } |\chi| < \sigma - \beta, \\
  p(\chi) & \text{if } |\chi| \in [\sigma - \beta, \sigma + \beta], \\
  \operatorname{sign}(\chi) \sigma & \text{if } |\chi| > \sigma + \beta
  \end{cases}
\end{align*}  
with \(p(\chi) =-\frac{\operatorname{sign}(\chi)}{4\beta}
\left(
|\chi|^2-2(\sigma+\beta)|\chi|+(\sigma-\beta)^2
\right)\), where \(\beta = 10^{-6}\) is a small smoothing parameter.
  \item We define the mapping \( T(\chi) \coloneqq \frac{1}{2}\ln{\left( \frac{1+\chi}{1- \chi} \right)} \) along with its Jacobian \( J(\chi) \coloneqq \frac{1}{ 1-\chi^2} \) and their product \( \mathcal{T}(\chi) \coloneqq J(\chi)T(\chi)= \frac{\ln{\left( \frac{1+\chi}{1- \chi} \right)}}{2 \left( 1-\chi^2 \right)} \), with \(T: (-1,1) \rightarrow \mathbb{R},J: (-1,1) \rightarrow [1,\infty), \mathcal{T}: (-1,1) \rightarrow \mathbb{R} \).
\end{itemize} 
\subsection{Preliminaries on Adaptive Performance Control}\label{APCsec}
APC focuses on ensuring that the output tracking error converges to a predefined, arbitrarily small residual set with a specified minimum convergence rate, provided that the input constraints allow it. In particular, according to the approach presented in \cite{smc}, given the system \eqref{eq:system} and a reference trajectory $y_r(t)$, we define the measurable tracking error $e_1(t) \coloneqq y(t)-y_r(t)$. Adaptive performance is then achieved if \( e_1(t) \) remains within the performance funnel defined by the bounds \( [-\rho(t), \rho(t)] \) for all \( t \geq 0 \). Note that the initial tracking error \( e_1(0) \) must lie within the performance envelope at \( t=0 \), i.e., \( \rho(0) > \lvert e(0) \rvert \). Subsequently, the adaptive performance law is given by:
\begin{equation} \label{convPF}
    \dot{\rho} = -l(\rho(t) - \rho_\infty) + \chi(x,x_d,\rho).
\end{equation}
 The first term in (\ref{convPF}) represents the evolution of the nominal exponential function, i.e. $(\rho(t)-\rho_{\infty})\exp{(-l t)} + \rho_{\infty}$, where the parameter $\rho_\infty>0$ denotes the maximum allowable steady-state value of $e_1(t)$ and $l>0$ determines its minimum convergence rate. The second term, $\chi(x,x_d,\rho)$ is a properly designed function that dynamically adjusts the predefined performance attributes when input saturation occurs. By properly amplifying $\rho(t)$, the non-negative term $\chi(x,x_d,\rho)$ ensures that the closed-loop signals remain bounded under input constraints. 
\section{Controller and Observer Design}
\subsection{State Feedback Controller}\label{stfeed}
In this section, we propose a state-feedback control scheme, based on the APC methodology. The control design is composed of three steps. First, we select the desired output performance characteristics for the closed-loop system. Next, we design a control signal that integrates the desired output specifications while incorporating an adaptive mechanism to handle amplitude input constraints. Finally, we present the dynamic control law along with an adaptive mechanism to address rate input constraints.

\textit{Step 1.}
First we employ the linear filter:
\begin{equation}\label{slide}
    s(e(t)) \coloneqq w^Te(t) = \delta\left[ \prod_{i=1}^{n-1} \left( \frac{d}{dt} + \lambda_i \right) \right]  e_1(t)
\end{equation}
\noindent where $\delta$ denotes a small positive constant that scales the surface error $s(e(t))$ and $e(t) \coloneqq [e_1(t),\dots,e_n(t)]^T \in \mathbb{R}^n$ represents the state tracking error, with $e_i(t) \coloneqq  x_i(t) - \left(\frac{d}{dt}\right)^{i-1}y_r(t)$ for $i = 1,\ldots,n$. Additionally, $w\coloneqq \delta[\omega_1,\dots,\omega_{n-1},1]^T \in \mathbb{R}^n$ consists of the coefficients of a Hurwitz polynomial $s^{n-1} + \omega_{n-1}s^{n-2} + \dots + \omega_2s + \omega_1$, with real negative roots $-\lambda_i < 0$ for $i = 1,\dots,n-1$.

According to \cite{slotine} (pp. 277), the tracking problem of \eqref{eq:system} is equivalent to ensuring that $e(t)$ remains within the invariant boundary layer $S\coloneqq \{ e \in \mathbb{R}^n: \lvert s(\Tilde{e}) \rvert \leq \Bar{s} \}$, where $\Bar{s}$ is a positive constant. If the initial condition $s(e(0)) \in S $ then all $e_i (t) , ~ i=1,\ldots,n$ remain bounded within compact sets whose size depends on $\Bar{s}$. Consequently, constraints on \eqref{slide} can be directly translated into bounds on the tracking error vector $e(t)$, making $s(e)$ an actual output performance metric of \eqref{eq:system}. For simplicity, let $s$ denote the time-varying tracking error surface $s(e(t))$. Subsequently, the desired performance characteristics are defined through the selection of the prescribed performance parameters. In particular, we set the minimum exponential convergence rate $l_{u_0}>0$ and the maximum absolute steady state error  $e_\infty>0$.

\textit{Step 2.}
In this step, we design the nominal control signal $u_d(s,\rho_u)$ and the associated adaptive law $\dot{\rho}_u$, where $\rho_u(t)>0$ is a dynamic performance function that defines the admissible envelope for the surface error $s$ and adapts to amplitude input constraints, as follows:
\begin{align}
    & u_d(s,\rho_u)   \coloneqq -k_u \mathcal{T} \left(  \frac{s}{\rho_u} \right) \label{eq:ud} \\
    & l_u(s,\rho_u)  \coloneqq l_{u_1}(\Bar{u}-|\mathrm{sat}_{\Bar{u}}(u_d(s,\rho_u))|) + l_{u_0} \label{lu} \\[0.5ex] 
    & \dot{\rho}_u    \coloneqq  \left( \frac{\Tilde{u}_d(s,\rho_u)}{s} - l_u(s,\rho_u)  \right)\rho_u  +  l_u(s,\rho_u) \rho_{u}^{\infty} \label{eq:adaptu} \\& \hspace{1.37em} \eqcolon   \alpha_{\rho_u}(s,\rho_u)  \nonumber
\end{align}
where $\Tilde{u}_d(s,\rho_u) \coloneqq \mathrm{sat}_{\Bar{u}}(u_d(s,\rho_u))- u_d(s,\rho_u)$ represents the control deficiency caused by amplitude saturation. The function $\mathcal{T}(\cdot)$ denotes the error transformation defined in Section \ref{notations} and $k_u>0$ is a control gain. Additionally, $ \rho_{u}^{\infty} = e_\infty \delta \prod_{j=1}^{n-1} \lambda_{j}$, and $l_{u_1}$ is a positive constant affecting the convergence rate \eqref{lu}. The term \eqref{lu} is introduced to dynamically regulate the convergence rate based on the input's proximity to input saturation, exploiting the available control effort to improve the response during transient phases. Note also that $\rho_u(t) \geq \rho_{u}^{\infty}>0$, since $\frac{\Tilde{u}_d(s,\rho_u)}{s} \geq 0$ for all $t \geq 0$.

\textit{Step 3.}
In this step, we unify amplitude and rate input constraints in the control design. In particular, $\rho_r(t)>0$ is a performance function associated with the control tracking error, introduced to enforce rate input constraints:
\begin{align}
   &  u_r(s,\rho_u,\rho_r,u)   \coloneqq -k_r \mathcal{T} \left(  \frac{u - \mathrm{sat}_{\Bar{u}}(u_d(s,\rho_u))}{\rho_r} \right)\label{eq:ur} \\
    & l_r(s,\rho_u,\rho_r,u)  \coloneqq l_{r_1} (\Bar{r}-|\mathrm{sat}_{\Bar{r}}(u_r(s,\rho_u,\rho_r,u))|) + l_{r_0} \label{lr}\\[0.5ex] 
    & \dot{\rho}_r  \coloneqq \left( \frac{\Tilde{u}_r(s,\rho_u,\rho_r,u)}{u - \mathrm{sat}_{\Bar{u}}(u_d(s,\rho_u))}- l_r(s,\rho_u,\rho_r,u) \right)\rho_r \label{eq:adaptr} \\[0.5ex]  & \hspace{2.4em} +  l_r(s,\rho_u,\rho_r,u)\rho_{r}^{\infty} \eqcolon \alpha_{\rho_r}(s,\rho_u,\rho_r,u)  \nonumber
\end{align}
where:
\begin{align*}
    \Tilde{u}_r(s,\rho_u,\rho_r,u) \coloneqq \mathrm{sat}_{\Bar{r}}(u_r(s,\rho_u,\rho_r,u))- u_r(s,\rho_u,\rho_r,u)
\end{align*}
represents the control deficiency due to rate saturation and $k_r,\rho_{r}^{\infty},l_{r_0},l_{r_1}$ are positive constants. Finally, the dynamic control law, which accounts for amplitude and rate input constraints, is given by:
\begin{align}\label{eq:control}
   \dot{u} = \mathrm{sat}_{\Bar{r}}(u_r(s,\rho_u,\rho_r,u)) \triangleq \alpha_{u}(s,\rho_u,\rho_r,u). 
\end{align}
Exploiting \eqref{eq:ud} and \eqref{eq:ur} the control law \eqref{eq:control} can be also written as:
\begin{align*}
    \dot{u}  = -\mathrm{sat}_{\Bar{r}}\left(k_r \mathcal{T} \left(  \frac{u + \mathrm{sat}_{\Bar{u}}\left( k_u \mathcal{T} \left(  \frac{s}{\rho_u} \right)\right)}{\rho_r} \right)\right)  
\end{align*}
Note that that the signal $u$ that will be implemented by the actuator undergoes a rate limit imposed by $\mathrm{sat}_{\Bar{r}}(\cdot)$ as well as an amplitude limit imposed by $\mathrm{sat}_{\Bar{u}}(\cdot)$. Consequently, the signals $u$ and $\dot{u}$ remain within the feasible compact sets $\mathcal{U}$ and $\mathcal{R}$, respectively.

If we had full access to the states of the system as well as the higher order derivatives of the reference trajectory $y_r(t)$, we could leverage the proposed state-feedback controller to obtain a closed-loop system with adaptive performance attributes. The following theorem establishes a sufficient condition for the boundedness of closed-loop signals under the proposed control law, ensuring robust tracking performance in the presence of input constraints.

\begin{thm}\label{lemma1}
Consider system \eqref{eq:system} under Assumptions \ref{ass1} and \ref{ass2}, and assume a reference trajectory \( y_r(t) \) that satisfies Assumption \ref{ass3}. Assume also that $u(0)\in[-\Bar u,\Bar u]$, $|s(e(0))|<\rho_u(0)$, $|u(0)-\operatorname{sat}_{\Bar u}(u_d(0))|<\rho_r(0)$ with $\rho_i(0)\geq\rho_i^\infty>0$, $i\in\{u,r\}$. Suppose that the fixed actuator limits $(\Bar u,\Bar r)$ satisfy
the feasibility condition \eqref{eq:actuator-feasibility},
specified in Phase B of the proof. Then the state-feedback
controller \eqref{eq:control}, together with the adaptive laws
\eqref{eq:adaptu} and \eqref{eq:adaptr}, ensures that:
\begin{enumerate}
\item All closed-loop signals remain bounded for all $t\geq0$,
with $u(t)\in\mathcal U$ and $\dot u(t)\in\mathcal R$.
\item There exist finite constants $\Bar{\rho}_u>0$ and
$\Bar{\xi}_u\in(0,1)$ such that
\begin{equation}\label{Sset}
|s(e(t))|
\leq\Bar{\xi}_u\rho_u(t)
<\rho_u(t)<\Bar{\rho}_u,
\qquad t\geq0.
\end{equation}
\end{enumerate}
\end{thm}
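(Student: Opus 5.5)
We follow the standard prescribed-performance argument in two phases: a local existence phase (Phase A) and a phase of uniform a priori bounds (Phase B), extending the maximal solution to $[0,\infty)$ by contradiction. Throughout we write $\xi_u \coloneqq s/\rho_u$ and $\xi_r \coloneqq (u-\mathrm{sat}_{\Bar u}(u_d))/\rho_r$.

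\textbf{Phase A.} Collect the closed-loop state $\zeta\coloneqq(z,x,u,\rho_u,\rho_r)$ and define the open set
\[
\Omega\coloneqq\{\zeta:\ |\xi_u|<1,\ |\xi_r|<1,\ \rho_u>0,\ \rho_r>0\}.
\]
By the hypotheses on the initial data, $\zeta(0)\in\Omega$. On $\Omega$ the right-hand side is locally Lipschitz in $\zeta$ and piecewise continuous in $t$. For \eqref{eq:adaptu} and \eqref{eq:adaptr}, note that $\Tilde u_d/s$ has a removable singularity, since $\Tilde u_d$ vanishes for $|u_d|<\Bar u-\beta$, i.e.\ near $s=0$; the same holds for $\Tilde u_r$. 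Hence a unique maximal solution exists on $[0,\tau_{\max})$ and stays in $\Omega$ there. Because $u(0)\in\mathcal U$ and $\dot u=\mathrm{sat}_{\Bar r}(\cdot)$, the inequality $|\dot u|\le\Bar r$ is immediate. To get $|u|\le\Bar u$, observe that $\dot u$ has the sign of $-\xi_r$, which drives $u$ toward $\mathrm{sat}_{\Bar u}(u_d)\in\mathcal U$; therefore $\mathcal U$ is forward invariant.

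\textbf{Phase B.} We show boundedness and keep the solution away from $\partial\Omega$.

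\emph{Lower bounds and the growth mechanism.} From \eqref{eq:adaptu} we get $\rho_u\ge\rho_u^\infty$, and similarly $\rho_r\ge\rho_r^\infty$. Differentiating gives
\[
\dot\xi_u=\frac{\dot s}{\rho_u}-\xi_u\frac{\dot\rho_u}{\rho_u}.
\]
Substituting \eqref{eq:adaptu}, the term $\Tilde u_d/s$ cancels the saturation deficiency appearing in $\dot s = w^T(\dot x-\dot x_d)$, where $\dot s$ contains $\delta g\, u$ and $u=\mathrm{sat}(u_d)+\rho_r\xi_r$.

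\emph{Lyapunov function for $\xi_u$.} Take $V_u=\tfrac12 T(\xi_u)^2$. Then
\[
\dot V_u = \mathcal T(\xi_u)\,\dot\xi_u,
\]
and the dominant term is $-\delta g\,k_u\,\mathcal T(\xi_u)^2/\rho_u$ (after the cancellation). The remaining terms are bounded by $|\mathcal T(\xi_u)|$ times a bound on $F\coloneqq w^T(Ae-\dot x_d)+\delta f$, plus $\delta g\rho_r|\xi_r|$ and $l_u$-terms of the form $l_u|\xi_u|(1+\rho_u^\infty/\rho_u)$.

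Bounding $F$ needs compact bounds on $x$ and $z$. For $x$, use $|s|<\rho_u$ together with the filter \eqref{slide}: $e$ lies in a set of size proportional to $\sup\rho_u$, and $x_d\in\mathcal Y$. For $z$, Assumption \ref{ass2} yields the bound $\|z\|\le$ a function of $\sup\|x\|$. This circularity is closed by the \emph{feasibility condition}.

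\emph{Feasibility condition \eqref{eq:actuator-feasibility}.} It states that $\Bar u$ exceeds, by a margin, the input needed to dominate $|F|/(\delta g^*)$ on the compact set generated by some level $\Bar\rho_u$; $\Bar r$ must likewise dominate $|\tfrac{d}{dt}\mathrm{sat}_{\Bar u}(u_d)|$ there. Under this condition:
\begin{itemize}
\item Whenever $\rho_u$ is large, $\dot\rho_u$ is negative. The ratio $\Tilde u_d/s$ is nonzero only under saturation, and saturation cannot persist once the admissible input suffices.
\item Hence $\rho_u<\Bar\rho_u$, and the same reasoning bounds $\rho_r$.
\end{itemize}
With $\rho_u$ and $\rho_r$ bounded, $F$ is bounded, and $\dot V_u<0$ whenever $|\xi_u|\ge\Bar\xi_u$ for some $\Bar\xi_u<1$. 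This gives $|\xi_u|\le\max(|\xi_u(0)|,\Bar\xi_u)<1$. Analogously, $V_r=\tfrac12T(\xi_r)^2$ gives $|\xi_r|\le\Bar\xi_r<1$, where the time derivative of $\mathrm{sat}_{\Bar u}(u_d)$ is bounded because $\dot\xi_u$ and $\dot\rho_u$ are now bounded.

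\emph{Conclusion.} The solution stays in a compact subset of $\Omega$, so $\tau_{\max}=\infty$ and both claims follow. The $\Bar\xi_u$ in \eqref{Sset} is obtained by enlarging the constant if necessary to absorb $|\xi_u(0)|$.

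\textbf{Main obstacle.} The hardest step is bounding $\rho_u$ uniformly. The numerator $\Tilde u_d/s$ can push $\rho_u$ up while bounds on $x$ and $z$ themselves depend on $\rho_u$. I would first try a comparison argument on $\rho_u$ alone. Show that at any time where $\rho_u\ge\Bar\rho_u$ and saturation is active, the feasibility inequality forces $\dot s\,\mathrm{sign}(s)<\dot\rho_u|\xi_u|$. The ratio $|\xi_u|$ is then decreasing, which drives $u_d$ back out of saturation in finite time; thereafter $\dot\rho_u=-l_u(\rho_u-\rho_u^\infty)<0$.
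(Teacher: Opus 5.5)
There is a genuine gap, and it sits at exactly the step you flag as the main obstacle: the uniform bound on $\rho_u$. The claim ``whenever $\rho_u$ is large, $\dot\rho_u$ is negative'' is not justified.

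By \eqref{eq:adaptu}, $\dot\rho_u=\Tilde u_d/\xi_u-l_u(\rho_u-\rho_u^\infty)$. The growth term $\Tilde u_d/\xi_u=\bigl(\operatorname{sat}_{\Bar u}(-k_u\mathcal T(\xi_u))+k_u\mathcal T(\xi_u)\bigr)/\xi_u$ depends only on $\xi_u$ and is unbounded as $|\xi_u|\to1$. So a large $\rho_u$ says nothing about the sign of $\dot\rho_u$ unless you already know $|\xi_u|\le\Bar\xi_u<1$.

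Your fallback argument does not repair this. It says that at $\rho_u\geq\Bar\rho_u$ with active saturation, $|\xi_u|$ decreases and $u_d$ desaturates in finite time. That gives no control on how far $\rho_u$ grows before desaturation. It is also evaluated exactly where your compact bounds on $e$ and $z$, obtained from the level $\Bar\rho_u$, no longer hold.

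Your guessed feasibility condition (input authority dominating $|F|/(\delta g^*)$) is really a non-saturation condition of the kind in Corollary~\ref{cor1}. The theorem's hypothesis \eqref{eq:actuator-feasibility} is different: it allows $u_d$ to stay saturated while $\rho_u$ grows to absorb the deficiency.

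The missing idea is to reverse your order of bounds. Build the candidate bounds into the domain of the maximal solution, i.e.\ work on $\rho_u\in(0,\Bar\rho_u)$, $\rho_r\in(0,\Bar\rho_r)$, or equivalently up to the first time $\rho_u$ reaches $\Bar\rho_u$. Then proceed as follows:
\begin{enumerate}
\item On that domain, $|s|<\Bar\rho_u$ gives compact sets for $e$ and $z$, and hence $|h|\le H_u$.
\item Since the adaptive law injects exactly $\Tilde u_d$, one has $\rho_u\dot\xi_u=h+u_d$ with the unsaturated $u_d$. So $V_u$ yields $|\xi_u|\le\Bar\xi_u<1$ whether or not saturation is active.
\item Only then bound the deficiency ratio by its maximum $M_u(\Bar u)$ over $|\xi|\le\Bar\xi_u$. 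This gives the linear comparison $\dot\rho_u\le M_u(\Bar u)-l_{u_0}(\rho_u-\rho_u^\infty)$.
\item The feasibility condition is precisely $M_u(\Bar u)<l_{u_0}(\Bar\rho_u-\rho_u^\infty)$, together with $M_r(\Bar r)<l_{r_0}(\Bar\rho_r-\rho_r^\infty)$. It keeps $\rho_i\le\Bar\rho_i-\delta_i$ strictly inside the domain, which gives $\tau_{\max}=\infty$.
\end{enumerate}

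Two smaller points.

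First, the adaptive law cancels $\Tilde u_d$, not $\delta g\Tilde u_d$, since $g$ is unknown. So the dominant term is $-k_u\mathcal T(\xi_u)^2/\rho_u$, not $-\delta g k_u\mathcal T(\xi_u)^2/\rho_u$, and the remainder $(\delta g-1)\operatorname{sat}_{\Bar u}(u_d)$ is bounded on compacts. In your decomposition the leftover is $(\delta g-1)\Tilde u_d$, which under saturation is of the same order as the dominant term and cannot be absorbed as bounded.

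Second, for $\rho_r$ your ``large implies decreasing'' idea does work, but for a different reason than desaturation. Since $|e_u|\le2\Bar u$, you get $|\xi_r|\le2\Bar u/\rho_r$, so for large $\rho_r$ the rate saturation is inactive (provided $\Bar r>\beta$). No such a priori bound exists for $s$, which is why $\rho_u$ needs the self-consistent argument above.
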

\begin{proof}
  Let us first define the normalized tracking error $\xi_u(t)  \coloneqq \frac{s(e(t))}{\rho_u(t)}$. Thenceforward, we omit the arguments of the closed-loop signals unless it is necessary, for clarity in the proof. Differentiating $\xi_u$ with respect to time and invoking \eqref{slide} we get: 
  \begin{align}\label{eq:dxi0}
      \dot{\xi}_u=\frac{1}{\rho_u} \left( w^T[\dot{x} - \dot{x}_d] - \xi_u \dot{\rho}_u \right).
  \end{align}
By denoting $x = e + x_d$ and substituting \eqref{eq:system} and \eqref{eq:adaptu} into \eqref{eq:dxi0} we obtain:
\begin{align}
    \dot{\xi}_u=\frac{1}{\rho_u} \Bigg(& \delta [f( e + x_d, z, d)  + g( e + x_d,z,d)u  - \left( \frac{d}{dt} \right)^n y_r \nonumber \\ & + \sum \limits _{j=1}^{n-1}\omega_je_{j+1}] + \xi_u l_u \left( \rho_u - \rho_{u}^{\infty}  \right) -  \Tilde{u}_d \Bigg) . \label{eq:dxia}
\end{align}
where the control deficiency is defined as $\Tilde{u}_d=\mathrm{sat}_{\Bar{u}}(u_d)- u_d$. Moreover, let us define the error $e_u \coloneqq u - \mathrm{sat}_{\Bar{u}}(u_d)$. Next, let us substitute \eqref{lu} as well as add and subtract $\delta g( e + x_d,z,d)\mathrm{sat}_{\Bar{u}}(u_d)$ into \eqref{eq:dxia}, leading to:
\begin{equation}\label{eq:dxi}
\begin{split}
        \dot{\xi}_u=\frac{1}{\rho_u} \left(h +  u_d \right)
    \end{split}
\end{equation}
where:
\begin{align*}
    h\coloneqq &\delta [f( e + x_d, z, d)  + g( e + x_d,z,d)e_u  -\left( \frac{d}{dt} \right) ^n y_r \\ & + \sum \limits _{j=1}^{n-1}\omega_je_{j+1}] + \xi_u (l_{u_1}\Delta\Bar{u}+ l_{u_0}) \left( \rho_u - \rho_{u}^{\infty} \right) \\ &+( \delta g( e + x_d,z,d)- 1)\mathrm{sat}_{\Bar{u}}(u_d)
\end{align*}
with $\Delta\Bar{u}\coloneqq\Bar{u}-|\mathrm{sat}_{\Bar{u}}(u_d)| $.   Following the same reasoning, we define the normalized error $\xi_r \coloneqq \frac{e_u}{\rho_r}$ and substituting \eqref{eq:ud},\eqref{lr} and \eqref{eq:adaptr}, the dynamics of $\xi_r$ is given by:
\begin{equation}\label{dxir}
    \begin{split}
        \dot{\xi}_r=\tfrac{1}{\rho_r}\left( S(\xi_u) \dot{\xi}_u      +\xi_r (l_{r_1}\Delta\Bar{r}+ l_{r_0}) \left( \rho_r - \rho_{r}^{\infty} \right)  +   u_r \right)
    \end{split}
\end{equation}
with $S(\xi_u) \coloneqq k_u\mathrm{sat}^{'}_{\Bar{u}}\left(-k_u \mathcal{T}(\xi_u)\right) \mathcal{T}^{'}(\xi_u)$ and $\Delta\Bar{r}\coloneqq\Bar{r}-|\mathrm{sat}_{\Bar{r}}(u_r)| $.
To continue, consider candidate constants
$\Bar{\rho}_u>\rho_u(0)$ and
$\Bar{\rho}_r>\rho_r(0)$, and define $\mathcal C\coloneqq
\mathbb R^{m+n}\times(-1,1)\times(0,\Bar{\rho}_u)
\times(-1,1)\times(0,\Bar{\rho}_r) .$ The augmented state
$\zeta=[z^T,e^T,\xi_u,\rho_u,\xi_r,\rho_r]^T$ satisfies
$\zeta(0)\in\mathcal C$. The proof proceeds in three phases.

\textit{Phase A. }
The closed-loop dynamics can be written as
$\dot\zeta=\phi(t,\zeta)$, where
$\phi:\mathbb R_+\times\mathcal C\to\mathbb R^{m+n+4}$ is
piecewise continuous and locally integrable in time and locally
Lipschitz in $\zeta$. Hence, Theorem 54 in \cite{ds} guarantees
a unique maximal solution
$\zeta:[0,\tau_{\max})\to\mathcal C$ for all \(t \in [0,\tau_{\max})\) with \(\tau_{max} \in \{ \mathbb{R}_{+}^*, \infty \}\).

\textit{Phase B.} Since $\zeta(t)\in\mathcal C$, the transformed errors
$\epsilon_i\coloneqq T(\xi_i)$, $i\in\{u,r\}$, are well defined
on $[0,\tau_{\max})$. Moreover, $|s|<\Bar{\rho}_u$. The
stability of the linear filter \eqref{slide}, together with the initialization set $\mathcal X_0$ and the reference set $\mathcal Y$, yields a compact
set $\Omega_e$ containing $e$. Consequently,
$x=e+x_d\in\Omega_e\oplus\mathcal Y$, and
Assumption~\ref{ass2}, together with
$z(0)\in\mathcal Z_0$, yields a compact set $\Omega_z$
containing $z$. Furthermore, $\mathcal U$ is positively invariant. Indeed, at
$u=\Bar u$, one has $e_u\geq0$, and hence $\dot u\leq0$; at
$u=-\Bar u$, one has $e_u\leq0$, and hence $\dot u\geq0$.
Therefore, $u(t)\in\mathcal U$ and $|e_u(t)|\leq2\Bar u$.

The preceding bounds and the local Lipschitz continuity of $f$
and $g$ imply that $|h|\leq H_u$ for some finite $H_u>0$. Thus,
for $V_u=\frac12\epsilon_u^2$, \eqref{eq:dxi} gives:
\begin{align}\label{dotvu}
    \dot V_u\leq
\frac{1}{\rho_u}\left[
H_u|\mathcal T(\xi_u)|
-k_u|\mathcal T(\xi_u)|^2
\right].
\end{align}
Choose a constant $\Bar\epsilon_u$ such that $\Bar\epsilon_u>
\max\left\{
|\epsilon_u(0)|,\frac{H_u}{k_u}
\right\}$.
Define
\[
m_u^s\coloneqq
\frac{
\Bar\epsilon_u
(k_u\Bar\epsilon_u-H_u)}
{\Bar\rho_u}>0.
\]
At $|\epsilon_u|=\Bar\epsilon_u$, inequality
\eqref{dotvu} gives $\dot V_u\leq-m_u^s<0$, which leads to $|\xi_u(t)|<\Bar\xi_u
\coloneqq T^{-1}(\Bar\epsilon_u)<1,~\forall t\in[0,\tau_{\max})$.
Define:
\[
M_u(\Bar u)\coloneqq
\max_{|\xi|\leq\Bar\xi_u}
\frac{
\operatorname{sat}_{\Bar u}(-k_u\mathcal T(\xi))
+k_u\mathcal T(\xi)}{\xi}
\]
where the quotient is continuously extended by zero at
$\xi=0$. It follows from \eqref{eq:adaptu} that $\dot\rho_u\leq
M_u(\Bar u)-l_{u_0}(\rho_u-\rho_u^\infty).$

Similarly, the boundedness of
$\xi_u,\rho_u,e,z,u$ implies that
$S(\xi_u)\dot\xi_u$ is bounded. Hence, for some finite
$H_r>0$,
\[
\left|
S(\xi_u)\dot\xi_u+
\xi_r(l_{r_1}\Delta\Bar r+l_{r_0})
(\rho_r-\rho_r^\infty)
\right|\leq H_r.
\]
Therefore, using the Lyapunov function $V_r = \tfrac{1}{2}\epsilon_r^2$ we get $|\epsilon_r(t)|\leq\Bar\epsilon_r
\coloneqq
\max\left\{|\epsilon_r(0)|,\frac{H_r}{k_r}\right\}$ and $|\xi_r(t)|\leq\Bar\xi_r
\coloneqq T^{-1}(\Bar\epsilon_r)<1.$
Next define:
\[
M_r(\Bar r)\coloneqq
\max_{|\xi|\leq\Bar\xi_r}
\frac{
\operatorname{sat}_{\Bar r}(-k_r\mathcal T(\xi))
+k_r\mathcal T(\xi)}{\xi}
\]
again using the continuous extension at $\xi=0$. Then $\dot\rho_r\leq
M_r(\Bar r)-l_{r_0}(\rho_r-\rho_r^\infty).$

The fixed actuator pair $(\Bar u,\Bar r)$ is called feasible if
there exist auxiliary constants
$\Bar\rho_u>\rho_u(0)$ and
$\Bar\rho_r>\rho_r(0)$ satisfying
\begin{equation}\label{eq:actuator-feasibility}
M_u(\Bar u)
<l_{u_0}(\Bar\rho_u-\rho_u^\infty),
~~
M_r(\Bar r)
<l_{r_0}(\Bar\rho_r-\rho_r^\infty).
\end{equation}
Under \eqref{eq:actuator-feasibility}, there exist
$\delta_u,\delta_r>0$ such that $\rho_i^\infty\leq\rho_i(t)
\leq\Bar\rho_i-\delta_i$ with $i\in\{u,r\},\forall t\in[0,\tau_{\max}).$

\textit{Phase C. }The preceding estimates imply that the maximal solution remains
in the compact set $\mathcal C'\coloneqq
\Omega_z\times\Omega_e
\times[-\Bar{\xi}_u,\Bar{\xi}_u]
\times[\rho_u^\infty,\Bar{\rho}_u-\delta_u]
\times[-\Bar{\xi}_r,\Bar{\xi}_r]
\times[\rho_r^\infty,\Bar{\rho}_r-\delta_r]$
which is strictly contained in $\mathcal C$. Proposition C.3.6
in \cite{ds} therefore yields $\tau_{\max}=\infty$.
Consequently, all closed-loop signals remain bounded,
$u(t)\in\mathcal U$, $\dot u(t)\in\mathcal R$, and
\eqref{Sset} follows from $|s|=|\xi_u|\rho_u$, completing the
proof.
\end{proof}

\begin{rem}\label{satlevel}
Condition~\eqref{eq:actuator-feasibility} is a joint feasibility
condition on the fixed actuator limits $(\Bar u,\Bar r)$. It
requires sufficient amplitude and rate authority for finite
auxiliary bounds $\Bar\rho_u$ and $\Bar\rho_r$ to exist, thereby
rendering the corresponding closed-loop compact set positively
invariant. These bounds depend on the compact
initialization sets, the reference and disturbance bounds, the
system nonlinearities over the induced compact set, and the
coupling between the amplitude and rate tracking loops.
An a priori computation of $\Bar\rho_u$ and $\Bar\rho_r$ in presence of hard input constraints remains an open problem and is left for future work.
\end{rem}
\begin{rem}\label{APF}
     In contrast to the related literature on APC \cite{smc,ratesiso,bergerIC} in this work we propose a novel, input-dependent convergence rate that adjusts automatically based on the available control effort. The term \eqref{lu} regulates the convergence rate according to the system's distance from the input saturation limit, ensuring that the system retains the prescribed steady-state performance specifications as fast as possible, leveraging the maximum feasible control effort during the transient period. The adaptive performance funnel (APF) refers to the time-varying constraint set $\mathcal{S}$ defined in \eqref{Sset}, within which the surface error $s(e(t))$, is constrained to evolve. Figure~\ref{fb} compares the APF generated by the proposed scheme, denoted by $\mathcal{F}_P(t)$ with the corresponding tracking error $e_P(t)$, against the APF of our previous work in \cite{ratesiso}, denoted by $\mathcal{F}_T(t)$ with tracking error $e_T(t)$. Notice that the use of the input-dependent rate \eqref{lu} leads to a less conservative performance funnel and faster transient recovery by explicitly leveraging the available control authority.
     \begin{figure}[thpb]
      \centering  \includegraphics[width=\columnwidth]{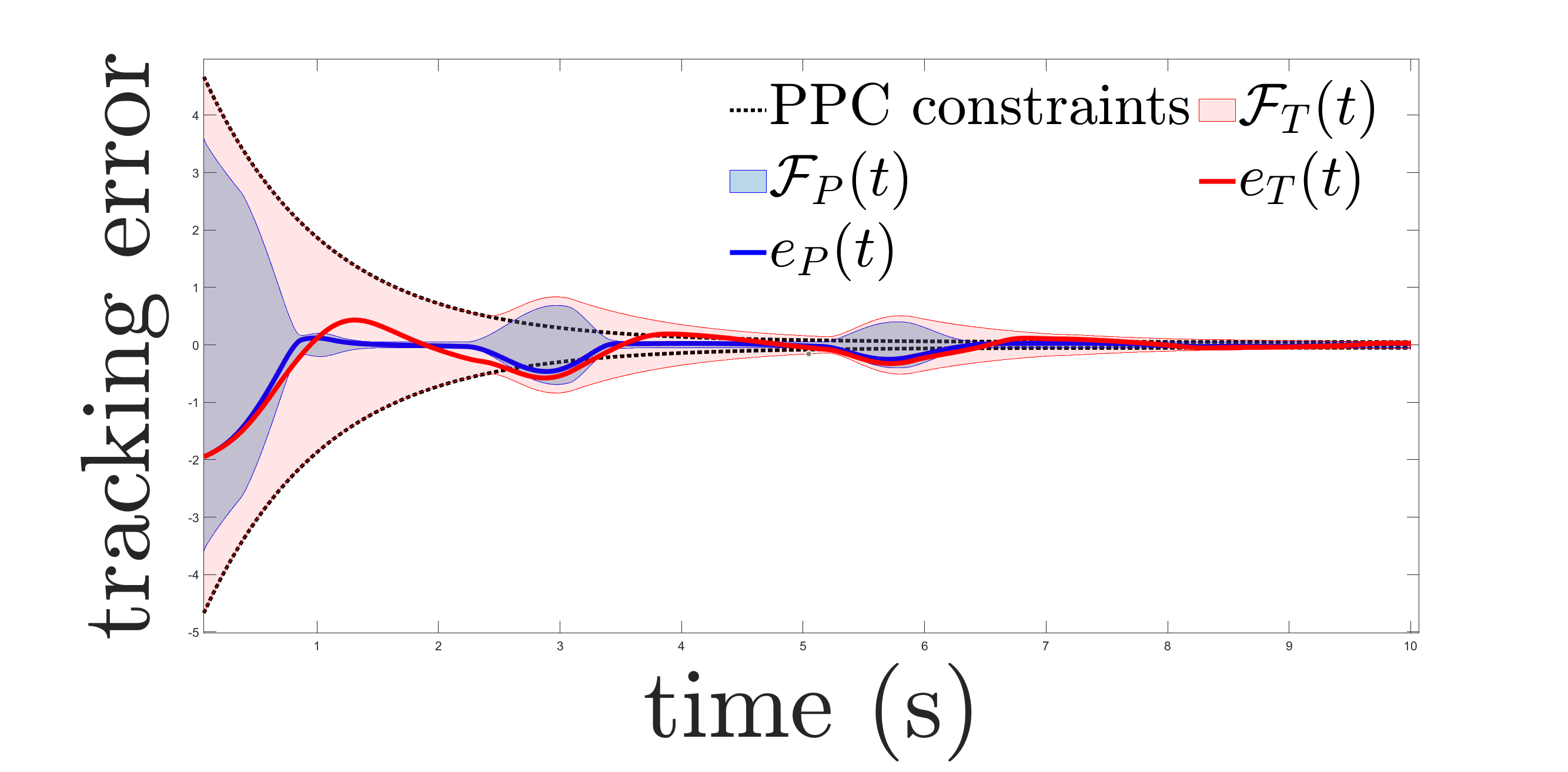}
   \caption{Comparison of adaptive performance funnels: The proposed scheme $\mathcal{F}_P(t)$ vs. the approach $\mathcal{F}_T(t)$ from \cite{ratesiso} for the benchmark system simulated in \cite{ratesiso}.}
      \label{fb}
\end{figure}
\end{rem}
The following Corollary establishes sufficient conditions to ensure accurate tracking, while preventing input saturation and guaranteeing the exponential decay of the performance functions to their nominal steady-state values. Specifically, it extends Theorem \ref{lemma1} by providing sufficient conditions on the input saturation levels \( \Bar{u} \) and \( \Bar{r} \) to guarantee that the control system can deliver the necessary control effort for tracking while adhering to the predefined performance constraints.
\begin{cor}\label{cor1}
Suppose that Theorem~\ref{lemma1} holds and let $\tau_0\geq0$
satisfy $|u_d(\tau_0)|<\Bar u-\beta$ and
$|u_r(\tau_0)|<\Bar r-\beta$, where $\beta>0$ is the smoothing
width of $\operatorname{sat}_{\sigma}$. For $t\geq\tau_0$, define:
\begin{align*}
q_u\coloneqq{}&
f-y_r^{(n)}+\sum_{j=1}^{n-1}\omega_je_{j+1}+ge_u
+\frac{\xi_ul_u}{\delta}(\rho_u-\rho_u^\infty)\\
q_r\coloneqq{}&
S(\xi_u)\dot\xi_u+\xi_rl_r(\rho_r-\rho_r^\infty)
\end{align*}
where $e_u=u-\operatorname{sat}_{\Bar u}(u_d),$
$\xi_u=s/\rho_u$, and $\xi_r=e_u/\rho_r$, and define:
\begin{align}
\Bar u_a&\coloneqq
\beta+\sup_{t\geq\tau_0}
\left|\frac{q_u(t)}{g(t)}\right|
\label{uac}\\
\Bar r_a&\coloneqq
\beta+\sup_{t\geq\tau_0}|q_r(t)|.
\label{uac1}
\end{align}
If $\Bar u>\Bar u_a$ and $\Bar r>\Bar r_a$, then $u_d$
and $u_r$ remain unsaturated for all $t\geq\tau_0$. Moreover:
\[
\operatorname{dist}\bigl(s(e(t)),\mathcal S_\infty\bigr)
\leq
\bigl(\rho_u(\tau_0)-\rho_u^\infty\bigr)
e^{-l_{u_0}(t-\tau_0)}
\]
with $\mathcal S_\infty
\coloneqq
\{s\in\mathbb R:|s|\leq\rho_u^\infty\}.$
\end{cor}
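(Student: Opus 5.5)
The plan is to argue by contradiction about the first exit from the unsaturated regime, and then integrate the performance law on the interval where no saturation occurs. By Theorem~\ref{lemma1} all closed-loop signals exist for all $t\geq0$ and are bounded. In particular, $q_u/g$ and $q_r$ are bounded on $[\tau_0,\infty)$, because $g\geq g^*$ by Assumption~\ref{ass1}, so $\Bar u_a$ and $\Bar r_a$ are finite. Define
\[
\tau_1\coloneqq\sup\{t\geq\tau_0:|u_d(\tau)|<\Bar u-\beta,\ |u_r(\tau)|<\Bar r-\beta,\ \forall\tau\in[\tau_0,t)\},
\]
which satisfies $\tau_1>\tau_0$ by continuity. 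Suppose $\tau_1<\infty$.

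On $[\tau_0,\tau_1)$ both saturations act as the identity. Hence $\Tilde u_d=\Tilde u_r=0$, $\operatorname{sat}_{\Bar u}'=1$, and \eqref{eq:dxi} reduces to $\dot\xi_u=\frac{1}{\rho_u}(\delta q_u+u_d)$, since in this regime $h=\delta q_u+(\delta g-1)u_d$ together with the $g$-weighting. The cleanest route is to read the quantity $q_u/g$ as the control that exactly cancels the drift. I would then show that the transformed error $\epsilon_u=T(\xi_u)$ cannot drive $|u_d|=k_u|\mathcal T(\xi_u)|$ up to $\Bar u-\beta$.

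I first tried to argue this at the boundary $|u_d|=\Bar u-\beta$: there $|u_d|>\sup|q_u/g|$, so the control term dominates the drift, $\xi_u\dot\xi_u<0$, and $|\xi_u|$ (hence $|u_d|$, which is monotone in $|\xi_u|$) is decreasing. This contradicts reaching the level at $\tau_1$. The analogous argument for $\xi_r$ uses \eqref{dxir} in the form $\dot\xi_r=\frac{1}{\rho_r}(q_r+u_r)$ with $|u_r|=\Bar r-\beta>|q_r|$.

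\textbf{Main obstacle.} The matching of $u_d$ against $q_u/g$ is not exact. The input enters \eqref{eq:dxi} as $\delta g\,u_d$ rather than $u_d$, so I would rework \eqref{eq:dxia} directly, keeping $\delta g\,\operatorname{sat}_{\Bar u}(u_d)$ instead of adding and subtracting. This gives $\dot\xi_u=\frac{\delta g}{\rho_u}\bigl(q_u/g+u_d\bigr)$, which is exactly the form needed. I would also check that $\rho_u$ and $\rho_r$ are continuous, so that the sign argument at the boundary time is legitimate, and that the smoothing band of width $\beta$ is why $\beta$ is added in \eqref{uac}--\eqref{uac1}. Together these give $\tau_1=\infty$.

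With no saturation for $t\geq\tau_0$, we have $\Tilde u_d\equiv0$ and $\Delta\Bar u=\Bar u-|u_d|>0$. Then \eqref{eq:adaptu} becomes
\[
\frac{d}{dt}(\rho_u-\rho_u^\infty)=-l_u(\rho_u-\rho_u^\infty),\qquad l_u\geq l_{u_0}.
\]
By the comparison lemma, $0\leq\rho_u(t)-\rho_u^\infty\leq(\rho_u(\tau_0)-\rho_u^\infty)e^{-l_{u_0}(t-\tau_0)}$. Since $|s|<\rho_u$ by \eqref{Sset}, it follows that $\operatorname{dist}(s,\mathcal S_\infty)\leq\max\{0,|s|-\rho_u^\infty\}\leq\rho_u-\rho_u^\infty$, which yields the stated bound.
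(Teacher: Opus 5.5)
Your proposal is correct and follows essentially the same route as the paper. You write the unsaturated dynamics as $\dot\xi_u=\frac{\delta g}{\rho_u}(u_d+q_u/g)$ and $\dot\xi_r=\frac{1}{\rho_r}(u_r+q_r)$, and you use a first-exit argument at $|u_d|=\Bar u-\beta$ and $|u_r|=\Bar r-\beta$, where $\xi_u\dot\xi_u<0$ and $\xi_r\dot\xi_r<0$. You then integrate $\dot\rho_u=-l_u(\rho_u-\rho_u^\infty)$ with $l_u\geq l_{u_0}$ and conclude using $|s|<\rho_u$.

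One correction: your first reduction $\dot\xi_u=\frac{1}{\rho_u}(\delta q_u+u_d)$ is wrong, since it drops the factor $\delta g$ multiplying $u_d$. The corrected form you derive in your ``main obstacle'' paragraph is exactly the one the paper uses.
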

\begin{proof}
As long as both saturation functions remain inactive,
$\operatorname{sat}_{\Bar u}(u_d)=u_d$ and
$\operatorname{sat}_{\Bar r}(u_r)=u_r$. Consequently:
\[
\dot\xi_u
=
\frac{\delta g}{\rho_u}
\left(u_d+\frac{q_u}{g}\right),
\qquad
\dot\xi_r
=
\frac{1}{\rho_r}(u_r+q_r).
\]
At the first possible activation boundary
$|u_d|=\Bar u-\beta$, condition~\eqref{uac} implies
$\xi_u\dot\xi_u<0$, because
$u_d=-k_u\mathcal T(\xi_u)$ has sign opposite to $\xi_u$.
Similarly, at $|u_r|=\Bar r-\beta$.,
condition~\eqref{uac1} implies $\xi_r\dot\xi_r<0$.
Thus, neither signal can leave its unsaturated region.

The saturation deficiencies therefore vanish, and the adaptive
laws reduce to $\dot\rho_i=-l_i(t)(\rho_i-\rho_i^\infty),~i\in\{u,r\}.$ In particular:
\[
\rho_u(t)-\rho_u^\infty
=
\bigl(\rho_u(\tau_0)-\rho_u^\infty\bigr)
\exp\left(-\int_{\tau_0}^{t}l_u(\tau)\,d\tau\right).
\]
Since $l_{u_0}\leq l_u(t)\leq l_{u_1}\Bar u+l_{u_0}$ the exponential bounds are:
\[
e^{-(l_{u_1}\Bar u+l_{u_0})(t-\tau_0)}
\leq
\exp\left(-\int_{\tau_0}^{t}l_u(\tau)\,d\tau\right)
\leq
e^{-l_{u_0}(t-\tau_0)}.
\]
Finally, $|s(e(t))|<\rho_u(t)$ by
Theorem~\ref{lemma1}, which gives the stated distance bound and
completes the proof.
\end{proof}
\begin{lem}\label{propPPC}
Suppose that Corollary \ref{cor1} holds for all $t \geq \tau_0$ and consider the metric $s(e(t)) \in \mathbb{R}$ as defined in \eqref{slide}. If the parameters of $s(e(t))$ are chosen such that $l_{u_0} < \min \limits _{j = 1,\dots,n-1} \{  \lambda_{i}  \}$, then for each component $e_i(t)$ of the error $e(t)$ there exist positive constants $\Bar{e}_i $ such that: 
    \begin{equation*}
        \left | e_i (t)\right | \leq \Bar{e}_i \exp (-l_{u_0}(t-\tau_0)) +  \frac{\rho_u^{\infty}2^{i-1}}{\delta \prod  \limits _{j=1}^{n-i}\lambda_{j}}\text{, } \forall t \geq \tau_0. 
    \end{equation*} 
\end{lem}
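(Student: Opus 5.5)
Starting from Corollary \ref{cor1} and Theorem \ref{lemma1}, the surface error obeys, for all $t\geq\tau_0$,
\[
|s(e(t))|<\rho_u(t)\leq \rho_u^\infty+\bigl(\rho_u(\tau_0)-\rho_u^\infty\bigr)e^{-l_{u_0}(t-\tau_0)} .
\]
The plan is to push this bound through the cascade of first-order stable filters that generates $s$ from $e_1$ in \eqref{slide}. Specifically, we introduce the intermediate signals $\sigma_0\coloneqq e_1$ and $\sigma_k\coloneqq(\frac{d}{dt}+\lambda_k)\sigma_{k-1}$ for $k=1,\dots,n-1$. Then $s=\delta\sigma_{n-1}$, and each $\sigma_{k-1}$ is the output of the scalar stable system $\dot\sigma_{k-1}=-\lambda_k\sigma_{k-1}+\sigma_k$. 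Working backwards from $\sigma_{n-1}$, the variation-of-constants formula gives
\[
\sigma_{k-1}(t)=e^{-\lambda_k(t-\tau_0)}\sigma_{k-1}(\tau_0)+\int_{\tau_0}^t e^{-\lambda_k(t-\tau)}\sigma_k(\tau)\,d\tau .
\]

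The inductive claim has the form $|\sigma_{k}(t)|\leq a_k e^{-l_{u_0}(t-\tau_0)}+b_k$. The starting values are $b_{n-1}=\rho_u^\infty/\delta$ and $a_{n-1}=(\rho_u(\tau_0)-\rho_u^\infty)/\delta$. We plug this bound into the convolution. The constant part contributes at most $b_k/\lambda_k$. The exponential part contributes
\[
a_k\,\frac{e^{-l_{u_0}(t-\tau_0)}-e^{-\lambda_k(t-\tau_0)}}{\lambda_k-l_{u_0}}\leq \frac{a_k}{\lambda_k-l_{u_0}}e^{-l_{u_0}(t-\tau_0)} .
\]
The hypothesis $l_{u_0}<\min_j\lambda_j$ is used exactly here: it keeps the decay rate $l_{u_0}$ and avoids a $te^{-\lambda t}$ term. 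The initial-condition term $e^{-\lambda_k(t-\tau_0)}|\sigma_{k-1}(\tau_0)|$ is likewise dominated by $e^{-l_{u_0}(t-\tau_0)}|\sigma_{k-1}(\tau_0)|$. The values $|\sigma_{k-1}(\tau_0)|$ are finite because Theorem \ref{lemma1} gives boundedness of $e$. Choosing the ordering of the roots so that the last factors peeled off are $\lambda_1,\dots$, we obtain
\[
|e_1(t)|\leq \bar e_1e^{-l_{u_0}(t-\tau_0)}+\frac{\rho_u^\infty}{\delta\prod_{j=1}^{n-1}\lambda_j} .
\]
This matches the $i=1$ case of the claim.

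For the higher components $e_i=e_1^{(i-1)}$, the approach is to express $e_i$ in terms of the $\sigma$'s. Using $\dot\sigma_{k-1}=\sigma_k-\lambda_k\sigma_{k-1}$ repeatedly, $e_1^{(i-1)}=\sigma_0^{(i-1)}$ is a combination of $\sigma_{i-1},\sigma_{i-2},\dots,\sigma_0$. The proof then proceeds by induction on $i$: assume each $|\sigma_k^{(m)}|$ admits an exponential-plus-constant bound. For the residual constants, the target $2^{i-1}/\prod_{j=1}^{n-i}\lambda_j$ suggests the following bookkeeping. The filters contributing to $e_i$ are those with roots $\lambda_1,\dots,\lambda_{n-i}$, which are integrated out. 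The remaining $i-1$ differentiations each act as $\frac{d}{dt}=(\frac{d}{dt}+\lambda)-\lambda$ on a filtered signal $\lambda/(p+\lambda)$. This contributes a factor of at most $1+1=2$ to the steady-state bound, since $|\lambda\cdot\text{(output)}|\leq|\text{input}|$ in the ultimate bound. Concretely, if $|v|\leq b$ ultimately and $\dot\eta=-\lambda\eta+v$, then $|\eta|\leq b/\lambda$ and $|\dot\eta|\leq|v|+\lambda|\eta|\leq 2b$.

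Iterating this gives $|e_i|\lesssim 2^{i-1}\rho_u^\infty/(\delta\prod_{j=1}^{n-i}\lambda_j)$, while all transient pieces decay at least at rate $l_{u_0}$ and are absorbed into $\bar e_i$. The main obstacle is this derivative bookkeeping. One must order the filter cascade so that exactly the roots $\lambda_{n-i+1},\dots,\lambda_{n-1}$ are used as the differentiated stages. One must also track carefully that the transient coefficients stay finite and decay at rate $l_{u_0}$, and not at a slower rate, at each differentiation step.
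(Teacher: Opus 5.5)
Your proposal is correct and matches the argument the paper relies on. The paper defers to Lemma~1 of \cite{dimanidis}, which is the same filter-cascade bound in the spirit of \cite{slotine}: convolution through the stable first-order filters, with $l_{u_0}<\min_j\lambda_j$ preserving the decay rate and each differentiated stage $p/(p+\lambda)=1-\lambda/(p+\lambda)$ costing at most a factor $2$, here extended from a constant to an exponentially decaying envelope. In the derivative step, state explicitly two things: bound each $\sigma_k^{(m)}$ through its own filter equation driven by the already-bounded $\sigma_{k+1}^{(m)}$, as in your final concrete estimate; and apply the factors to $e_1$ in the order $\lambda_{n-1},\dots,\lambda_1$, so that the integrated roots are $\lambda_1,\dots,\lambda_{n-i}$ for every $i$ at once, giving exactly $2^{i-1}\rho_u^\infty/(\delta\prod_{j=1}^{n-i}\lambda_j)$.
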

\begin{proof}
The proof follows the same steps as Lemma 1 in \cite{dimanidis}. 
\end{proof}
Lemma \ref{propPPC} states that with an appropriate choice of the filter \( s(\cdot) \), the minimum convergence rate of the output tracking error \( e_1(t) \) can be explicitly determined, in the absence of input saturation.
\begin{figure*}[thpb]
\centering
\input{BD}
\caption{Block diagram of the proposed controller \eqref{PPOE}-\eqref{eq:outcontrol}.}
\label{fa}
\end{figure*}

\subsection{Prescribed Performance Observer}\label{PPO}
 In this section, we introduce the PPO, an advancement of the HGO \cite{hgo,hgo8} with nonlinear gains that facilitate its design and enhance its performance. Leveraging the measurable output tracking error $e_1(t)$ and building on the PPC technique~\cite{bechltac}, we propose the following observer:
\begin{align}
       &\dot{\hat{e}}_i \coloneqq \hat{e}_{i+1} + \frac{\gamma_i r(t)}{r_\infty^i}  T \left( \frac{e_1(t)-\hat{e}_1(t)}{r(t)} \right)~,i=1,\dots,n-1 \nonumber \\
    &\dot{\hat{e}}_n \coloneqq  \frac{\gamma_n r(t)}{r^n_\infty}T\left( \frac{e_1(t)-\hat{e}_1(t)}{r(t)} \right) \label{PPOt}
\end{align}
where $T(\cdot)$ denotes the error transformation defined in Section \ref{notations}, $\gamma_i$ are positive constants and $r(t)$ is a user-specified performance function:
\begin{align*}
    r(t) \coloneqq (r(0)- r_{\infty})\mathrm{exp}(-l_o t)+r_{\infty} ,~l_o>0 , r_{\infty} \in (0,1)
\end{align*}
with $r(0) > |e_1(0)-\hat{e}_1(0)|$ and $r(0) \geq r_{\infty} $. By denoting $\hat{e} \coloneqq [\hat{e}_1,\dots,\hat{e}_n]^T$ the proposed observer can be written compactly as: 
\begin{align}\label{eq:PPO}
     \dot{\hat{e}}  \coloneqq \alpha_{\hat{e}}(e,\hat{e},r)
\end{align}
The PPO is nonlinear but features a simple structure that does not rely on system knowledge, approximation or adaptive mechanisms, maintaining the computational simplicity of a standard HGO. 

\subsubsection{Parameters Selection} The key characteristic of \eqref{eq:PPO} concerns the nonlinear gains that facilitate the parameter selection and enhance the performance of the observer. First, $\gamma_i,~i=1,\dots, n$ are selected such that the polynomial:
\begin{equation}\label{obspar}
    p(\mathfrak{s}) = \mathfrak{s}^n + \gamma_1\mathfrak{s}^{n-1} + \dots + \gamma_{n-1}\mathfrak{s} + \gamma_n
\end{equation}
is Hurwitz. The remaining observer parameters determine the prescribed-performance
function $r(t).$ Let $E_o\coloneqq
\sup_{\mathcal X_0\times\mathcal Y}
|e_1(0)-\hat e_1(0)|.$ For a certain initialization margin $\mu_o\in(0,1)$, selecting
$r(0)>E_o/\mu_o$ guarantees
$|e_1(0)-\hat e_1(0)|/r(0)<\mu_o$. Thus, $r(0)$ should not be
chosen excessively close to the initial estimation error, since this
would initialize the normalized error near the singular boundary of
$\mathcal T$.

The parameter $r_\infty$ specifies the ultimate output-estimation
accuracy, since
$\limsup_{t\to\infty}|e_1(t)-\hat e_1(t)|\leq r_\infty$.
A smaller $r_\infty$ improves this bound but increases the observer
bandwidth, numerical stiffness, and sensitivity to measurement noise.
Finally, $l_o$ determines how rapidly the performance boundary
approaches $r_\infty$. In particular, if
$r_\infty<r_c<r(0)$ is required at time $t_c>0$, it is sufficient to
select:
\[
l_o\geq\frac{1}{t_c}
\ln\!\left(\frac{r(0)-r_\infty}{r_c-r_\infty}\right).
\]
Larger values of $l_o$ provide faster contraction but also increase
the initial boundary rate
$|\dot r(0)|=l_o\bigl(r(0)-r_\infty\bigr)$ and should therefore be
selected consistently with the stability condition of
Theorem~\ref{th2}.

\begin{rem}
The parameter $r(0)$ must accommodate the initial estimation error, while $r_\infty$ sets its desired steady-state bound. Increasing $l_o$ accelerates convergence of $r(t)$, whereas excessively small $r_\infty$ or large $l_o$ may increase noise sensitivity and numerical stiffness.
\end{rem}

\subsubsection{Computational Complexity}

We compare the number of observer states and scalar operations required
per vector-field evaluation. The constant gain coefficients are
precomputed, and the prescribed function $r(t)$ is calculated offline.
Operations common to all observers are omitted. In
Table~\ref{tab:observer_complexity}, $\mathrm{M}$,
$\mathrm{A}$, $\mathrm{D}$, and $\mathrm{T}$ denote multiplication,
addition/subtraction, division, and one evaluation of the scalar
mapping $T$, respectively.
\begin{table}[t]
\centering
\caption{Complexity per observer vector-field evaluation}
\label{tab:observer_complexity}
\setlength{\tabcolsep}{3pt}
\renewcommand{\arraystretch}{1.1}
\begin{tabular}{lccc}
\toprule
\textbf{Design} & \textbf{States} & \textbf{Scalar operations} & \textbf{Order} \\
\midrule
HGO
& $n$
& $n\mathrm{M}+n\mathrm{A}$
& $O(n)$ \\
PPO
& $n$
& $n\mathrm{M}+n\mathrm{A}+\mathrm{D}+\mathrm{T}$
& $O(n)$ \\
Standard ESO
& $n+1$
& $(n+1)\mathrm{M}+(n+1)\mathrm{A}$
& $O(n)$ \\
\bottomrule
\end{tabular}
\end{table}
Thus, the PPO has the same $O(n)$ arithmetic complexity and the same
number of differential equations as a classical HGO. Its additional online cost relative to an HGO consists of one scalar
division and one evaluation of $T$. Compared with a standard ESO that introduces an additional state to estimate the lumped uncertainty, the PPO uses one fewer observer state. Therefore, the PPO has computational complexity comparable to that of an HGO,
while providing a modest dimensional advantage over a standard ESO.

\subsubsection{Gain Structure and Steady-State Noise Amplification} \label{gainAmp}
For the $i$-th PPO injection term:
\[
\phi_i \coloneqq
\frac{\gamma_i r(t)}{r_\infty^i}
 T\!\left(
\frac{e_1-\hat e_1}{r(t)}
\right)
\]
the differential gain with respect to the output-estimation error is:
\[
\frac{\partial\phi_i}{\partial(e_1-\hat e_1)}
=
\frac{\gamma_i}{r_\infty^i}J(\xi_o),
\quad
\xi_o\coloneqq\frac{e_1-\hat e_1}{r(t)}
\]
where $J(\xi_o)= T'(\xi_o)$. Since
$J(\xi_o)\to1$ as $\xi_o\to0$, the highest local steady-state gain is
$O(r_\infty^{-n})$, as in the classical HGO. Thus, although the PPO provides a continuously varying gain structure,
it does not reduce the local steady-state noise-amplification order.
The smoother control response observed in the simulations is
atributed to the explicit rate-limiting mechanism, which directly
attenuates rapid variations in the applied input.

\section{Main Results}\label{main}
In this section, we first present an output-feedback APC scheme and subsequently we prove performance recovery of the state-feedback controller  \eqref{eq:adaptu},\eqref{eq:adaptr},\eqref{eq:control} when interconnected with the proposed PPO \eqref{eq:PPO}, presenting a separation principle.
\subsection{Output-Feedback Controller}
In the output feedback implementation, all state-feedback terms in \eqref{eq:adaptu},\eqref{eq:adaptr},\eqref{eq:control}, involving \( e \) are replaced by their corresponding output-feedback counterparts. Specifically, the term \( s \) is replaced by its saturated approximation:
\begin{align*}
    \hat{s}(t)\coloneqq
\rho_u(t)\operatorname{sat}_{1-\xi_s}
\left(\frac{s(\hat e(t))}{\rho_u(t)}\right),
\qquad 0<\xi_s<1-\beta.
\end{align*}
The resulting output-feedback controller is expressed as follows:

\begin{align}
    & \dot{\hat{e}}  = \alpha_{\hat{e}}(e,\hat{e},r) \label{PPOE}\\
           & \dot{\rho}_u  = \alpha_{\rho_u}(\hat{s},\rho_u)  \label{eq:outadaptu}  \\
           & \dot{\rho}_r  = \alpha_{\rho_r}(\hat{s},\rho_u,\rho_r,u)  \label{eq:outadaptr}  \\
           &  \dot{u} = \alpha_{u}(\hat{s},\rho_u,\rho_r,u) \label{eq:outcontrol} 
\end{align} 
Saturating observer-based signals outside a compact set is a
standard approach for mitigating the peaking phenomenon of
HGOs~\cite{hgo8,dimanidis,funnel}. In this paper, the normalized
saturation guarantees $\left|\frac{\hat s(t)}{\rho_u(t)}\right|
\leq1-\xi_s<1$, so the transformation in the output-feedback controller remains
well defined throughout the observer transient. The block diagram of the proposed controller is
presented in Fig.~\ref{fa}, while Table~\ref{tab:tuning}
summarizes the main design parameters and tuning guidelines.
\begin{table*}[t]
\centering
\caption{Controller design parameters and practical tuning guidelines}
\label{tab:tuning}
\setlength{\tabcolsep}{6pt}
\renewcommand{\arraystretch}{1.1}
\begin{tabular}{p{2.0cm} p{6.4cm} p{8.03cm}}
\toprule
\textbf{Parameter}
& \textbf{Role in the control design}
& \textbf{Practical tuning guideline} \\
\midrule
$\lambda_i,\ \delta$
& Shape the surface-error dynamics and the scaling of $s(e(t))$
& Select according to \eqref{slide} to achieve the desired nominal transient response. \\

$\rho_u(0),\ \rho_r(0)$
& Initial performance envelopes
& Choose such that
$\rho_u(0)>|s(e(0))|$ and
$\rho_r(0)>
|u(0)-\operatorname{sat}_{\bar u}(u_d(0))|$. \\

$\rho_u^\infty,\ \rho_r^\infty$
& Prescribed steady-state performance bounds
& Set $\rho_u^\infty$ according to the desired steady-state error,
$\rho_u^\infty=e_\infty\delta\prod_{j=1}^{n-1}\lambda_j$,
and select $\rho_r^\infty$ sufficiently small to achieve the desired
control-tracking accuracy. \\

$k_u,\ k_r$
& Nominal and rate-control gains
& Higher values accelerate convergence but may trigger saturation earlier. \\

$l_{u_0},\ l_{r_0}$
& Minimum convergence rates of $\rho_u(t)$ and $\rho_r(t)$
& Set the prescribed minimum decay rates of the adaptive performance
functions. \\

$l_{u_1},\ l_{r_1}$
& Convergence acceleration away from amplitude and rate saturation
& Larger values exploit the available control authority more aggressively
but may trigger saturation earlier. \\

$\gamma_i$
& PPO observer gains
& Select such that \eqref{obspar} is Hurwitz and
the sector condition \eqref{obssector} is feasible over the
selected observer-error sector.\\

$r(0)$
& Initial observer performance boundary
& Choose $r(0)>E_o/\mu_o$, where $\mu_o\in(0,1)$ is the desired initial
normalized-error margin. \\

$r_\infty$
& Ultimate output-estimation bound
& Select according to the desired accuracy and Theorem~\ref{th2};
smaller values increase noise sensitivity and numerical stiffness. \\

$l_o$
& Contraction rate of $r(t)$
& For a target $r(t_c)\leq r_c$, select
$l_o\geq t_c^{-1}
\ln\!\big((r(0)-r_\infty)/(r_c-r_\infty)\big)$,
subject to Theorem~\ref{th2}. \\

$\xi_s$
& Normalized saturation margin for $s(\hat e)$
& Select such that
$0<\xi_s<1-\beta-\bar{\xi}_u$; see Remark~\ref{bars}. \\
\bottomrule
\end{tabular}
\end{table*}

\begin{rem}\label{bars}
The saturation $\operatorname{sat}_{1-\xi_s}$ is the identity
for $|\chi|<1-\xi_s-\beta$. Since Theorem~\ref{lemma1}
guarantees $|s(e)|/\rho_u\leq\Bar{\xi}_u$, selecting
$\Bar{\xi}_u<1-\xi_s-\beta$ ensures that the additional
saturation is inactive on the state-feedback invariant set.
During the observer transient, however, it guarantees
$|\hat{s}|/\rho_u\leq1-\xi_s<1$, so that the transformation
$\mathcal T(\hat{s}/\rho_u)$ remains well defined.
\end{rem}

\begin{rem}\label{MIMO}  
 The proposed control strategy can be extended to feedback linearizable MIMO systems that, under some additional technical assumptions, can be transformed into the Byrnes-Isidori normal form \cite{isidorib}. Specifically, consider an $n$-th order MIMO nonlinear system with \(m\) inputs and \(m\) outputs and uniform relative degree \(r_u = \{r_1, \dots, r_m\}\), described compactly as:  
\begin{equation} \label{eq:systemMIMO}
    \begin{split}
    & \mathbf{\dot{z}} = \mathbf{f_0(x,z,d)} \\
    &\mathbf{\dot{x} = Ax + B[f(x,z,d) + G(x,z,d)u]} \\
    & \mathbf{y = C^Tx}
    \end{split}
\end{equation} 
with the block diagonal matrices given by:
\begin{align*}
    \mathbf{A} &= \mathrm{blockdiag}\left( A_1 ,\dots , A_m  \right) ,~ A_i = \begin{bmatrix}
    \textbf{0}_{r_i-1} & \textbf{\textit{I}}_{r_i-1} \\ 0 & \textbf{0}_{r_i-1}^T
\end{bmatrix} \\
    \mathbf{B} &=\mathrm{blockdiag}\left( B_1 ,\dots , B_m  \right),~ B_i =\begin{bmatrix}
    \textbf{0}_{r_i-1} \\ 1
\end{bmatrix} \\
    \mathbf{C} &= \mathrm{blockdiag}\left(C_1 ,\dots , C_m  \right) ,~ C_i =\begin{bmatrix}
    1 \\ \textbf{0}_{r_i-1}
\end{bmatrix}
\end{align*}
where \( \mathbf{x} = [x_1^T, \dots, x_m^T]^T \in \mathbb{R}^{r_u}\) is the state vector with \(x_i = [x_i^1, \dots, x_i^{r_i}]^T \in \mathbb{R}^{r_i}\); \(\mathbf{u} = [u_1, \dots, u_m]^T \in \overbrace{\mathcal{U} \times \dots \times \mathcal{U}}^{\text{m times}} \) is the control input; \( \mathbf{y} = [x_1^1, \dots, x_m^1]^T \in \mathbb{R}^m\) is the system output; \( \mathbf{d} \in  \mathbb{R}^p\) models bounded disturbances; and \(\mathbf{z} \in \mathbb{R}^{n-r_u}\) represents the BIBS internal dynamics. The system nonlinearities \(\mathbf{f_0(x,z, d)}\in \mathbb{R}^{n-r_u},~\mathbf{f(x,z, d)}\in \mathbb{R}^{m}\) and \(\mathbf{G( x,z, d)} \in \mathbb{R}^{m \times m}\) are locally Lipschitz, with \(\mathbf{G_s}(\chi) = \frac{\mathbf{G}(\chi) + \mathbf{G}^T(\chi)}{2}\) (the symmetric part of the input gain matrix) assumed to be positive (or negative) definite. By defining the tracking error surface \eqref{slide} elementwise for a desired trajectory \( \mathbf{y_r} = [y_{r_1}, \dots, y_{r_m}]^T\), and following the previously outlined design steps, an analogous control scheme for the MIMO case can be obtained.
\end{rem}
\subsection{Performance Recovery}
The performance recovery of the state-feedback closed-loop system, under output-feedback, is outlined in the following theorem.

\begin{thm}\label{th2}
Consider system \eqref{eq:system} under
Assumptions~\ref{ass1}--\ref{ass3} with the error metric
\eqref{slide}. Let $(\Bar u,\Bar r)$ and
$(\Bar\rho_u,\Bar\rho_r)$ satisfy
\eqref{eq:actuator-feasibility} and
\eqref{eq:of-feasibility}, respectively. Assume $\Bar\xi_u<1-\xi_s-\beta$, and that the PPO
conditions \eqref{obssector} and \eqref{obsinit} hold.
Furthermore, assume that $\hat e(0)$ belongs to a known
compact set, $u(0)\in[-\Bar u,\Bar u]$,
$\rho_i(0)\geq\rho_i^\infty>0$ for $i\in\{u,r\}$,
$|s(e(0))|<\Bar\xi_u\rho_u(0)$, and
$\left|u(0)-\operatorname{sat}_{\Bar u}(u_d^o(0))\right|
<\rho_r(0)$, where
$u_d^o(0)=u_d(\hat s(0),\rho_u(0))$.

Then, under \eqref{PPOE}--\eqref{eq:outcontrol}, there exists
$r^*>0$ such that, for every $r_\infty\in(0,r^*)$:
\begin{enumerate}
\item all closed-loop signals remain bounded;
\item $|s(e(t))|\leq\Bar\xi_u\rho_u(t)<\rho_u(t),
\qquad t\geq0;$
\item if the conditions of Corollary~\ref{cor1} and
Lemma~\ref{propPPC} hold, then $e_1(t)$ converges with rate at
least $l_{u_0}$ to:
\[
\mathcal E_\infty
\coloneqq
\{e_1\in\mathbb R:|e_1|\leq e_\infty\}.
\]
\end{enumerate}
\end{thm}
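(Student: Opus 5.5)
We prove Theorem~\ref{th2} along the lines of the proof of Theorem~\ref{lemma1}, with the PPO dynamics added to the augmented state. We then show, by a separation (performance-recovery) argument, that for small $r_\infty$ the output-feedback closed loop stays in the same compact set as the state-feedback one.

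\emph{Observer error coordinates.} We introduce scaled observer errors $\eta_i\coloneqq (e_i-\hat e_i)/r_\infty^{\,n-i}$ and the normalized output error $\xi_o\coloneqq(e_1-\hat e_1)/r(t)$. Using \eqref{PPOt} and $\dot e_n = f+gu-y_r^{(n)}$, we obtain the singularly perturbed form
\[
r_\infty\dot\eta = A\eta - \Gamma\,\tfrac{r(t)}{r_\infty^{\,n-1}}T(\xi_o)+ r_\infty B\,(f+gu-y_r^{(n)}),
\]
with $\Gamma=[\gamma_1,\dots,\gamma_n]^T$. Writing $r T(\xi_o)/r_\infty^{n-1} = \eta_1\cdot \frac{T(\xi_o)}{\xi_o}$, the injection becomes $\Gamma\eta_1$ times a sector gain $\kappa(\xi_o)=T(\xi_o)/\xi_o\ge1$. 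The sector condition \eqref{obssector} should guarantee that $A-\kappa\Gamma C^T$ admits a common quadratic Lyapunov function $P$ for all $\kappa$ in the admissible sector. The initialization condition \eqref{obsinit} keeps $|\xi_o(0)|<\mu_o$.

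\emph{Maximal solution and invariance.} We augment $\zeta$ with $\hat e$ (equivalently $\eta$) and define the open set $\mathcal C$ as before, now with the extra constraint $|\xi_o|<1$. The key observation is that the control laws \eqref{eq:outadaptu}--\eqref{eq:outcontrol} use $\hat s$, and $|\hat s/\rho_u|\le 1-\xi_s$ always holds. Hence $u$, $\dot u$, $\rho_u$, $\rho_r$ are globally well defined and bounded regardless of the observer, exactly as in Phase B of Theorem~\ref{lemma1}. The bound $\dot\rho_u\le M_u(\Bar u)-l_{u_0}(\rho_u-\rho_u^\infty)$ still holds, with the maximum now over $|\xi|\le1-\xi_s$; this is presumably what \eqref{eq:of-feasibility} encodes. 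Therefore $\rho_u,\rho_r$ are bounded a priori, and on any interval where $|s|\le\Bar\xi_u\rho_u$ the plant states $e,z$ lie in the compact sets $\Omega_e,\Omega_z$, with the disturbance term $f+gu-y_r^{(n)}$ bounded by some $F$.

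On such an interval, $V_o=\eta^TP\eta$ satisfies $r_\infty\dot V_o\le -c\|\eta\|^2 + r_\infty c' F\|\eta\|$. This gives an exponentially decaying bound $\|\eta(t)\|\le c_1 e^{-ct/r_\infty}\|\eta(0)\|+c_2 r_\infty$, together with $|\xi_o|<1$ and $\dot r$-compatibility ($l_o$ as constrained in the statement). Thus $|e_i-\hat e_i|\le r_\infty^{n-i}\|\eta\|$, and after a boundary-layer time $T(r_\infty)\to0$ we get $|s(e)-s(\hat e)|=O(r_\infty)$.

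\emph{Recovery of the state-feedback invariant.} We write $\hat s = s + (\hat s - s)$. Once the saturation in $\hat s$ is inactive (which holds since $\Bar\xi_u<1-\xi_s-\beta$), the $\xi_u$ dynamics equal \eqref{eq:dxi} plus a perturbation of order $O(r_\infty)$ in the argument of $\mathcal T$. Consequently, at $|\epsilon_u|=\Bar\epsilon_u$ we still have $\dot V_u<0$ for $r_\infty$ small, yielding $|s|\le\Bar\xi_u\rho_u$. During the short peaking interval $[0,T(r_\infty)]$, $u$ is rate limited, so $|\dot s|$ is bounded by a constant independent of $r_\infty$. Because $|s(e(0))|<\Bar\xi_u\rho_u(0)$ strictly, $s$ cannot leave the set before the observer has converged, provided $r_\infty<r^*$. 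A standard contradiction or first-exit-time argument then shows the compact set is never left, so Proposition C.3.6 of \cite{ds} gives $\tau_{\max}=\infty$. This establishes items 1 and 2. Item 3 follows by applying Corollary~\ref{cor1} and Lemma~\ref{propPPC} to $s(e)$, since the saturations are inactive and the $O(r_\infty)$ mismatch is absorbed by the margin in $\Bar u_a,\Bar r_a$.

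\emph{Main obstacle.} The hardest step is the boundary-layer estimate. We must show that the time-varying, state-dependent sector gain $\kappa(\xi_o)$ keeps $|\xi_o|<1$ while $r(t)$ shrinks, and that the peaking time scales like $r_\infty\ln(1/r_\infty)$, uniformly over the compact initial sets. To get this, we will first try a common Lyapunov function for the sector family from \eqref{obssector}, combined with a comparison argument on $\xi_o$ near $|\xi_o|=1$, where $T\to\infty$ dominates $\dot r/r$.
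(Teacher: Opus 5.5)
Your overall architecture matches the paper's. The observer error enters an $O(r_\infty)$ set within a time $\tau(r_\infty)\to0$. The slow states stay near their initial set during peaking because the controller outputs are saturated. The strict decrease of $V_u$ on $|s|=\bar\xi_u\rho_u$ survives an $O(r_\infty)$ perturbation of the vector field, and continuation closes the argument. Two steps, however, do not go through as written.

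\emph{First, your observer coordinates do not match the hypotheses.} With $\eta_1=(e_1-\hat e_1)/r_\infty^{n-1}$ you correctly get $r_\infty\dot\eta=A_o(1)D_\kappa\eta+r_\infty B\Delta$. Here $D_\kappa\coloneqq\operatorname{diag}(\kappa,1,\dots,1)$ and $\kappa=T(\xi_o)/\xi_o$, since $A-\kappa\Gamma C^T=A_o(1)D_\kappa$. But \eqref{obssector} concerns $A_o(J)=D_JA_o(1)$ with $J=(1-\xi_o^2)^{-1}$. That family comes from the paper's coordinate $\eta_1=rT(\xi_o)/r_\infty^{n-1}$, i.e., from rescaling your first coordinate by $\kappa(\xi_o)$. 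The two families are similar only through the parameter-dependent $D$, so a common $P$ for one does not in general give a common $P$ for the other. Your step ``\eqref{obssector} should guarantee a common Lyapunov function for $A-\kappa\Gamma C^T$'' therefore fails.

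Likewise, the second inequality in \eqref{obsinit}, $\sqrt{P_{11}/\lambda_{\min}(P)}\,T(\mu_o)<T(\bar\xi_o)$, is not merely an initial condition. In the $T$-coordinates it is what confines $|\xi_o(t)|<\bar\xi_o$ throughout the transient, via a first-exit argument using $r(t)\geq r(0)e^{-l_ot}$, $b_0(r_\infty)<T(\bar\xi_o)$ and $\alpha/(2r_\infty)>l_o$. You have two options:
\begin{itemize}
\item Adopt the paper's coordinates. You then pay for the extra term $r_\infty C\theta$, with $|\theta|\leq l_ob_o|\eta_1|$, which is dominated by $q/r_\infty$ for small $r_\infty$.
\item Keep yours, which avoids the $\dot r/r$ term and has the shorter sector $[1,T(\bar\xi_o)/\bar\xi_o]$. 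You then need the LMI for $A_o(1)D_\kappa$ at both endpoints and the analogue $\sqrt{P_{11}/\lambda_{\min}(P)}\,\mu_o<\bar\xi_o$. That proves a variant of the theorem, not the stated one.
\end{itemize}

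\emph{Second, the rate loop is not handled.} You claim $\rho_r$ is bounded ``regardless of the observer, exactly as in Phase B.'' Phase B bounds $\xi_r$ through $H_r$, i.e., through a bound on $S(\xi_u)\dot\xi_u$. Its output-feedback counterpart $\frac{d}{dt}\operatorname{sat}_{\bar u}(u_d^o)$ contains $w^T\dot{\hat e}$. During peaking, $\dot{\hat e}_n=\gamma_n rT(\xi_o)/r_\infty^n$ can be of order $r_\infty^{-n}$. The resulting $\bar\xi_r$ then tends to $1$ and $M_r(\bar r)$ blows up as $r_\infty\to0$. So Phase B does not give an observer-independent bound, and \eqref{eq:actuator-feasibility} cannot be invoked for $\rho_r$.

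You also never show that $|\xi_r^o|$ stays away from $1$. That is needed for $u_r^o$ and the $\rho_r$-law to be defined, and for the continuation theorem, since your set $\mathcal C$ contains $|\xi_r|<1$ as an open constraint. The paper closes this with the second inequality of \eqref{eq:of-feasibility}, which your guess of that condition omits:
\begin{itemize}
\item Invariance of $\mathcal U$ gives $|e_u^o|\leq2\bar u$. At $\rho_r=\bar\rho_r^+>2\bar u/\xi_{r,a}$ one has $|\xi_r^o|<\xi_{r,a}$, so the rate saturation is inactive and $\dot\rho_r=-l_r(\rho_r-\rho_r^\infty)<0$. This yields $\rho_r\leq\bar\rho_r^+$ independently of the observer.
\item Because the $\rho_r$-law cancels the deficiency $\tilde u_r$ exactly, $\dot\xi_r^o=\rho_r^{-1}\bigl(q_r^o-k_r\mathcal T(\xi_r^o)\bigr)$. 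A first-exit argument with a bound $Q_r^o$ on $q_r^o$ (which may depend on $r_\infty$) then gives $|\xi_r^o|\leq\bar\xi_r^o<1$.
\end{itemize}
With these two pieces in place, the lower bound $\rho_r\geq\rho_r^\infty$ and the Lipschitz extension of $\xi\mapsto\operatorname{sat}_{\bar r}(-k_r\mathcal T(\xi))$ to $[-1,1]$ supply the $O(r_\infty)$ estimate \eqref{L} that your boundary argument needs.
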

\begin{proof} 
Let us define $u_d^o\coloneqq u_d(\hat s,\rho_u),$ $u_r^o\coloneqq u_r(\hat s,\rho_u,\rho_r,u),$ $e_u^o\coloneqq
u-\operatorname{sat}_{\bar u}(u_d^o)$ and $\xi_r^o\coloneqq\frac{e_u^o}{\rho_r}.$ The initialization conditions imply
$|\xi_o(0)|<1$, $|\xi_r^o(0)|<1$, and
$\rho_u(0),\rho_r(0)>0$. The proof follows a separation principle by first establishing fast convergence of the nonlinear observer dynamics and then showing that the resulting closed-loop dynamics remain arbitrarily close to those of the state-feedback. By selecting $r_\infty$ sufficiently small, the observer subsystem converges faster than the state-feedback controller dynamics. As a consequence of this time-scale separation, the system state experiences only small variations during the observer transient. Once the PPO enters its steady-state, the output-feedback closed-loop system evolves in a neighborhood of the corresponding state-feedback trajectory, thereby recovering the performance of the state-feedback design. In what follows, we will omit the time argument in the closed-loop signals unless it is necessary, for clarity in the proof. Thence, we express the observer dynamics in terms of scaled estimation errors:
\begin{equation}\label{scaling}
    \begin{split}
        &\eta_1 (t) \coloneqq \frac{T \left( \frac{e_1(t) - \hat{e}_1(t)}{r(t)} \right)r(t)}{r_{\infty}^{n-1}}  \\ 
        &\eta_i (t) \coloneqq \frac{e_i(t) - \hat{e}_i(t)}{r_{\infty}^{n-i}},~ i=2,...,n.
    \end{split}
\end{equation}
It follows from \eqref{scaling} that
$\hat e=e-\mathcal D(r_\infty,r,\eta)$, where $\eta \coloneqq [\eta_1,\dots,\eta_n]^T$ and:
\[
\mathcal D(r_\infty,r,\eta)\coloneqq
\begin{bmatrix}
rT^{-1}\!\left(\dfrac{r_\infty^{n-1}\eta_1}{r}\right)\\
r_\infty^{n-2}\eta_2\\
\vdots\\
r_\infty\eta_{n-1}\\
\eta_n
\end{bmatrix},
~~
\xi_o\coloneqq
T^{-1}\!\left(\frac{r_\infty^{n-1}\eta_1}{r}\right).
\]
Subsequently, the closed-loop system can be written in the singularly perturbed form: 
\begin{align}
        \dot{s}(e) &= w^TAe + w^TB\Delta(e + x_d,z,d,\dot{x}_d,u)\\ 
        \dot{\rho}_u  & = \alpha_{\rho_u}(\hat{s}(e - \mathcal D(r_\infty,r,\eta)),\rho_u)  \label{eq:outadaptup} \\
            \dot{\rho}_r  & = \alpha_{\rho_r}(\hat{s}(e - \mathcal D(r_\infty,r,\eta)),\rho_u,\rho_r,u)  \label{eq:outadaptrp}  \\ 
             \dot{u} & = \alpha_{u}(\hat{s}(e - \mathcal D(r_\infty,r,\eta)),\rho_u,\rho_r,u) \label{eq:outcontrolp}  \\
              r_{\infty} \dot{\eta} & = A_o(J(\xi_o))\eta + r_\infty C \theta(\eta_1,\xi_o,r,\dot{r})  \label{eq:eta} \\& \hspace{1.35em} + r_{\infty}B\Delta(e + x_d,z,d,\dot{x}_d,u) \nonumber  
\end{align}
with:
\begin{align*}
& \Delta  \coloneqq f(e + x_d,z,d) -B^T \dot{x}_d  +  g(e + x_d,z,d)u   \\
& A_o(J)\coloneqq
\begin{bmatrix}
-J\gamma_1 & J & 0 & \cdots & 0\\
-\gamma_2 & 0 & 1 & \cdots & 0\\
\vdots & \vdots & \ddots & \ddots & \vdots\\
-\gamma_{n-1} & 0 & \cdots & 0 & 1\\
-\gamma_n & 0 & \cdots & 0 & 0
\end{bmatrix}\\
&\theta (\eta_1,\xi_o,r,\dot{r}) \coloneqq  \frac{\dot{r}}{r} \eta_1 \left( 1 - \frac{J(\xi_o)\xi_o}{T{(\xi_o)}}  \right) .
\end{align*}
Assume that there exist $\bar\xi_o\in(0,1)$,
$P=P^T>0$, and $q>0$ such that
\begin{equation}\label{obssector}
A_o(J)^TP+PA_o(J)\preceq-qI,
\qquad J\in[1,J(\bar\xi_o)].
\end{equation}
Since $A_o(J)$ is affine in $J$, it suffices to verify
\eqref{obssector} at $J=1$ and $J=J(\bar\xi_o)$. Moreover,
select $\mu_o\in(0,\bar\xi_o)$ such that
\begin{equation}\label{obsinit}
\frac{|e_1(0)-\hat e_1(0)|}{r(0)}\leq\mu_o,
\quad
\sqrt{\frac{P_{11}}{\lambda_{\min}(P)}}\,T(\mu_o)
<T(\bar\xi_o).
\end{equation}

Let $\chi\coloneqq[z^T,e^T,\rho_u,u]^T$, and let
$\mathcal K_0$ denote its compact initialization set. Since the
initial performance inequalities hold strictly, choose a compact
neighborhood $\mathcal K_1$ such that:
\[
\mathcal K_0\subset\operatorname{int}\mathcal K_1,
\qquad
|s(e)|<\Bar\xi_u\rho_u,
\qquad
\rho_u<\Bar\rho_u
\quad\text{on }\mathcal K_1.
\]
The normalized saturation of $\hat s/\rho_u$, the invariance of
$\mathcal U$, and $|\dot u|\leq\bar r$ imply that the dynamics
of $\chi$ are uniformly bounded on $\mathcal K_1$. Therefore,
there exist $\tau_1>0$ and $\bar\Delta>0$, independent of
$r_\infty$, such that
\[
\chi(t)\in\mathcal K_1,\qquad
|\Delta(t)|\leq\bar\Delta,\qquad t\in[0,\tau_1].
\]

We now establish an observer estimate valid throughout the
transient. Define
\[
b_o\coloneqq
\max_{|\xi|\leq\bar\xi_o}
\left|1-\frac{J(\xi)\xi}{T(\xi)}\right|,
\]
using its continuous value at $\xi=0$. Since
$|\dot r|/r\leq l_o$, the Lyapunov function
$W(\eta)=\eta^TP\eta$ satisfies, as long as
$|\xi_o|<\bar\xi_o$:
\[
\dot W\leq
-\left(\frac{q}{r_\infty}-2\|PC\|l_ob_o\right)\|\eta\|^2
+2\|PB\|\bar\Delta\|\eta\|.
\]
For sufficiently small $r_\infty$, there exist constants
$\alpha,c>0$, independent of $r_\infty$, such that
\begin{equation}\label{Westimate}
W(t)\leq
\max\left\{
W(0)e^{-\alpha t/r_\infty},cr_\infty^2
\right\}.
\end{equation}

To close the sector argument, define
\[
v_0(r_\infty)\coloneqq r_\infty^{n-1}\eta(0)
=
\begin{bmatrix}
r(0)T(\xi_o(0))\\
r_\infty\tilde e_2(0)\\
\vdots\\
r_\infty^{n-1}\tilde e_n(0)
\end{bmatrix}.
\]
Using $r(t)\geq r(0)e^{-l_ot}$ and \eqref{Westimate} gives:
\[
|T(\xi_o(t))|
\leq
\max\left\{
b_0(r_\infty)
e^{-\left(\frac{\alpha}{2r_\infty}-l_o\right)t},
\sqrt{\frac{c}{\lambda_{\min}(P)}}r_\infty^{n-1}
\right\}
\]
where
\[
b_0(r_\infty)\coloneqq
\frac{\sqrt{v_0(r_\infty)^TPv_0(r_\infty)}}
{r(0)\sqrt{\lambda_{\min}(P)}}.
\]
The compact initialization sets and \eqref{obsinit} imply
$b_0(r_\infty)<T(\bar\xi_o)$ for all sufficiently small
$r_\infty$. Select $r_\infty$ further such that:
\[
\frac{\alpha}{2r_\infty}>l_o,
\qquad
\sqrt{\frac{c}{\lambda_{\min}(P)}}r_\infty^{n-1}
<T(\bar\xi_o).
\]
A first-exit argument then gives
$|\xi_o(t)|<\bar\xi_o$ throughout the observer transient,
thereby validating \eqref{obssector}. Moreover, once $\eta$
enters $\mathcal H\coloneqq
\{\eta:W(\eta)\leq cr_\infty^2\}$ the second inequality guarantees that the sector condition
continues to hold.

Furthermore, compact initialization gives
$W(0)\leq c_0/r_\infty^{2(n-1)}$ for some $c_0>0$.
Thus, $\eta$ enters $\mathcal H$ no later than
\[
\tau(r_\infty)
=
\frac{r_\infty}{\alpha}
\ln\left(\frac{c_0}{cr_\infty^{2n}}\right)
\]
and $\tau(r_\infty)\to0$ as $r_\infty\to0$. Therefore,
$\tau(r_\infty)<\tau_1$ for all sufficiently small
$r_\infty$.

Since $\eta\in\mathcal H$, it holds $\|\mathcal D(r_\infty,r,\eta)\|\leq L_Dr_\infty$ for some $L_D>0$ independent of $r_\infty$.

Next, let $\hat\xi_u\coloneqq\hat s/\rho_u$ and define:
\[
M_u^o\coloneqq
\max_{|\xi|\leq1-\xi_s}
\frac{
\operatorname{sat}_{\Bar u}(-k_u\mathcal T(\xi))
+k_u\mathcal T(\xi)}
{\xi}
\]
where the quotient is continuously extended by zero at
$\xi=0$. Assuming that $\Bar r>\beta$, define also $\xi_{r,a}\coloneqq
 T^{-1}\!\left(
\frac{\Bar r-\beta}{k_r}
\right)>0.$ Let $\Bar\rho_u$ and $\Bar\rho_r$ be auxiliary constants
satisfying the state-feedback feasibility condition
\eqref{eq:actuator-feasibility}. We impose the additional
output-feedback compatibility condition:
\begin{equation}\label{eq:of-feasibility}
M_u^o
<
l_{u_0}(\Bar\rho_u-\rho_u^\infty),
~~
\max\left\{
\rho_r(0),\rho_r^\infty,
\frac{2\Bar u}{\xi_{r,a}}
\right\}
<\Bar\rho_r.
\end{equation}
Consequently, constants $\Bar\rho_u^+$ and
$\Bar\rho_r^+$ can be selected such that $\max\left\{
\rho_u(0),
\rho_u^\infty+\frac{M_u^o}{l_{u_0}}
\right\}
<\Bar\rho_u^+<\Bar\rho_u$
and $\max\left\{
\rho_r(0),\rho_r^\infty,
\frac{2\Bar u}{\xi_{r,a}}
\right\}
<\Bar\rho_r^+<\Bar\rho_r.$ The output-feedback adaptive law gives $\dot\rho_u
\leq
M_u^o-l_{u_0}(\rho_u-\rho_u^\infty).$
Therefore, at $\rho_u=\Bar\rho_u^+$ it holds $\dot\rho_u
\leq
M_u^o-l_{u_0}
(\Bar\rho_u^+-\rho_u^\infty)<0.$
At $\rho_u=\rho_u^\infty$, the adaptive deficiency term is
nonnegative. Hence, $\rho_u^\infty
\leq\rho_u(t)\leq\Bar\rho_u^+.$

We next verify that the unsaturated transformation
$\xi_r^o=e_u^o/\rho_r$ remains well defined. Since
$u(0)\in\mathcal U$, the set
$\mathcal U=[-\Bar u,\Bar u]$ is positively invariant.
Indeed, at $u=\Bar u$, one has $\xi_r^o\geq0$ and hence
$\dot u\leq0$, whereas at $u=-\Bar u$, one has
$\xi_r^o\leq0$ and hence $\dot u\geq0$. Therefore, $|e_u^o|\leq2\Bar u.$

At $\rho_r=\Bar\rho_r^+$, one has $|\xi_r^o|
\leq\frac{2\Bar u}{\Bar\rho_r^+}
<\xi_{r,a}.$
Hence, the rate saturation is inactive and $\dot\rho_r
=
-l_r(\rho_r-\rho_r^\infty)<0.$ At $\rho_r=\rho_r^\infty$, the adaptive deficiency term is
nonnegative. It follows that $\rho_r^\infty
\leq\rho_r(t)\leq\Bar\rho_r^+.$ On the maximal interval where $|\xi_r^o|<1$, its dynamics
can be written as $\dot\xi_r^o
=
\frac{1}{\rho_r}
\left(
q_r^o-k_r\mathcal T(\xi_r^o)
\right)$, with $q_r^o\coloneqq
-\frac{d}{dt}
\operatorname{sat}_{\Bar u}(u_d^o)
+\xi_r^ol_r(\rho_r-\rho_r^\infty).$
The preceding compact bounds and the observer estimate imply
that $|q_r^o(t)|\leq Q_r^o$ on the maximal interval for some $Q_r^o>0$. Choose
$\Bar\xi_r^o\in(|\xi_r^o(0)|,1)$ sufficiently close to one
such that $k_r\mathcal T(\Bar\xi_r^o)>Q_r^o.$ Then, at $\xi_r^o=\Bar\xi_r^o$, $\dot\xi_r^o
\leq
\frac{Q_r^o-k_r\mathcal T(\Bar\xi_r^o)}{\rho_r}<0$ whereas at $\xi_r^o=-\Bar\xi_r^o$, $\dot\xi_r^o
\geq
\frac{-Q_r^o+k_r\mathcal T(\Bar\xi_r^o)}{\rho_r}>0.$
Thus, a first-exit argument gives $|\xi_r^o(t)|\leq\Bar\xi_r^o<1$, for all $t\in[0,\tau_{\max}).$

Let $\bar\xi_u<1$ denote the compact normalized-error bound
obtained in Theorem~\ref{lemma1}. The stable filter defining
$s$, the compact reference set, and Assumption~\ref{ass2}
provide compact sets $\Omega_e$ and $\Omega_z$ whenever
$|s|\leq\bar\xi_u\rho_u$ and
$\rho_u\leq\bar\rho_u^+$. Define $\psi\coloneqq[z^T,e^T,\rho_u,u]^T$
and the closed compact set:
\[
\Psi\coloneqq
\left\{
\psi:
\begin{array}{l}
z\in\Omega_z,\quad e\in\Omega_e,\\
|s(e)|\leq\bar\xi_u\rho_u,\\
\rho_u^\infty\leq\rho_u\leq\bar\rho_u^+,\\
u\in[-\bar u,\bar u]
\end{array}
\right\}.
\]
The sets $\Omega_e$, $\Omega_z$, and $\Psi$ are selected so
that the portion of $\mathcal K_1$ reached before
$\tau(r_\infty)$ lies in $\operatorname{int}\Psi$.

Denote by $F_s$ and $F_o$ the state and output-feedback
vector fields of $\psi$, respectively, with $\rho_r$ regarded
as a parameter in the compact interval
$[\rho_r^\infty,\bar\rho_r^+]$. Since
$\eta\in\mathcal H$ it holds $\|\mathcal D(r_\infty,r,\eta)\|\leq L_Dr_\infty$, for some $L_D>0$ independent of $r_\infty$. The normalized saturation is inactive on the state-feedback
invariant set. Moreover, the applied rate-control mapping $\xi\mapsto
\operatorname{sat}_{\bar r}(-k_r\mathcal T(\xi))$
admits a Lipschitz-continuous extension to $[-1,1]$, since it
is constant near $\xi=\pm1$. Hence, the local Lipschitz
continuity of the remaining controller terms gives:
\begin{equation}\label{L}
\left\|
F_o\bigl(\psi,\mathcal D(r_\infty,r,\eta)\bigr)
-F_s(\psi)
\right\|
\leq Lr_\infty
\end{equation}
on $\Psi\times[\rho_r^\infty,\bar\rho_r^+]\times\mathcal H$ for some $L>0$ independent of $r_\infty$.

Define $V_u =
\frac{1}{2}
T^2\!\left(\frac{s(e)}{\rho_u}\right)$ and the performance-boundary component $\partial_s\Psi\coloneqq
\left\{
\psi\in\Psi:
|s(e)|=\Bar\xi_u\rho_u
\right\}.$ Let $\dot V_u^s$ and $\dot V_u^o$ denote the derivatives
of the same function $V_u$ along the state-feedback and
output-feedback vector fields, respectively, i.e., $\dot V_u^s
=
\nabla_\psi V_u(\psi)^\top F_s(\psi)$ and $\dot V_u^o
=
\nabla_\psi V_u(\psi)^\top
F_o\bigl(\psi,\mathcal D(r_\infty,r,\eta)\bigr)$. Since $\Bar\rho_u^+<\Bar\rho_u$ and $\Bar\rho_r^+<\Bar\rho_r$, the set $\Psi\times
[\rho_r^\infty,\Bar\rho_r^+]$ is contained in the candidate compact set used in
Theorem~\ref{lemma1}. Therefore, the bound $|h|\leq H_u$
derived in the state-feedback proof remains valid on this set.
For clarity, denote the derivatives of $V_u$ along the
state and output-feedback vector fields by $\dot V_u^s$ and $\dot V_u^o$, respectively. On $\partial_s\Psi$, the strict estimate
from Theorem~\ref{lemma1} gives $\dot V_u^s
\leq-m_u^s<0.$ Since $V_u^s$ is continuously differentiable on $\Psi$, define $c_V\coloneqq
\max \limits _{\psi\in\Psi}
\|\nabla_\psi V_u(\psi)\|<\infty.$ Using \eqref{L}, we obtain, on $\partial_s\Psi$:
\begin{align*}
\dot V_u^o
 \leq 
\dot V_u^s
+
\nabla_\psi V_u(\psi)^TLr_\infty \leq
-m_u^s+c_VLr_\infty.
\end{align*}
Hence, if $r_\infty<
\frac{m_u^s}{2c_VL}$, then $\dot V_u^o \leq-\frac{m_u^s}{2}<0$ on $\partial_s\Psi$. Therefore, the output-feedback vector field points strictly
inward on the performance-boundary component
$\partial_s\Psi$, and the trajectory cannot leave $\Psi$
through this boundary. The stable filter defining $s$ and
Assumption~\ref{ass2} then retain $e$ and $z$ in
$\Omega_e$ and $\Omega_z$, respectively. Finally, the same
Lyapunov estimate for $W$, with $\Delta$ bounded on this
compact set, establishes the positive invariance of
$\mathcal H$.

Let $r^*>0$ be sufficiently small such that, for every
$r_\infty\in(0,r^*)$, all the preceding smallness
requirements hold, i.e., $b_0(r_\infty)<T(\Bar\xi_o)
$, $\frac{\alpha}{2r_\infty}>l_o$, $\sqrt{\frac{c}{\lambda_{\min}(P)}}
r_\infty^{n-1}<T(\Bar\xi_o)$, $\tau(r_\infty)<\tau_1$ and $r_\infty<
\frac{m_u^s}{2c_VL}$. Then, for every
$r_\infty\in(0,r^*)$, the slow variables remain in $\mathcal K_1$ until
$\eta$ enters $\mathcal H$. Subsequently, the trajectory remains in the compact set $\left\{
(\psi,\rho_r,\eta)\in
\Psi\times[\rho_r^\infty,\Bar\rho_r^+]\times\mathcal H:
|\xi_r^o|\leq\Bar\xi_r^o
\right\}$. The continuation theorem therefore gives
$\tau_{\max}=\infty$, and all closed-loop signals remain
bounded. The definition of $\mathcal K_1$ guarantees $|s(e(t))|<\Bar\xi_u\rho_u(t)$ during the observer transient, while the positive invariance of $\Psi$ guarantees the same bound afterward. Consequently, $|s(e(t))|
\leq\Bar\xi_u\rho_u(t)
<\rho_u(t)$ for all $t\geq0$, which proves claim~2.

Finally, if the conditions of Corollary~\ref{cor1} hold and
$l_{u_0}<\min_{j=1,\ldots,n-1}\lambda_j$, then
Lemma~\ref{propPPC} gives $|e_i(t)|
\leq
\bar e_i e^{-l_{u_0}(t-\tau_0)}
+
\frac{2^{i-1}\rho_u^\infty}
{\delta\prod_{j=1}^{n-i}\lambda_j}$ for $i=1,\ldots,n$.
Recalling that $\rho_u^\infty
=\delta e_\infty\prod_{j=1}^{n-1}\lambda_j$,
we conclude that the output tracking error converges exponentially to
$\mathcal E_\infty$, completing the
proof.
\end{proof}
\section{Simulation Results}\label{simsec}
\subsection{Simulation I: Application to a Flexible Joint Robot}
To validate the efficacy of the proposed controller in systems with high relative degree, we consider a single-link flexible-joint robot arm, modeled as in \cite{mousavi}. The dynamics of the system is transformed into the canonical form \eqref{eq:system}, with \(n = 4\), where the input gain is given by:
\begin{align*}
    g=\frac{k_F}{J_1 J_2 N}
\end{align*}
and the nonlinear function \(f(x)\) is expressed as:  
\begin{align*}
     f(x) = 
    &   \frac{k_F}{J_2 N} \left( \frac{k_F}{J_1 N} - \frac{k_F}{J_2} \right)x_1   \\
    & - \left( \frac{k_F}{J_2 N} \left(1 + \frac{F_1 F_2 N}{J_1 k_F}\right) + \frac{k_F}{J_2} \right)x_3  \\
     & - \frac{k_F}{J_2 N} \left( \frac{F_2}{J_2} + \frac{F_1 N}{J_1} \right)x_2  - \left( \frac{F_1}{J_1} + \frac{F_2}{J_2} \right)x_4\\
    & 
    + \frac{mgd_F}{J_2}(x_3 \sin(x_1) + x_2^2 \cos(x_1)) \\
    & - \frac{k_F mgd_F}{J_2^2 N} \left(\cos(x_1) - \frac{F_1 J_2 N}{J_1 k_F} x_2 \sin(x_1)\right).
\end{align*}
 The physical parameters used in the simulation are \(F_1 = 0.1\), \(F_2 = 0.15\), \(J_1 = 0.15\), \(J_2 = 0.2\), \(k_F = 0.4\), \(N = 2\), \(m = 0.8\), \(g = 9.81\), and \(d_F = 0.6\) with an initial state $x(0) = [0.5,0,0,0]^T$. Note that the proposed controller relies solely on knowledge of the system's relative degree and the sign of the input gain, without requiring explicit information about the system nonlinearities.

The objective is to track the reference trajectory $y_r(t)= 1.5 \sin(t)$ using the output-feedback controller \eqref{PPOE}-\eqref{eq:outcontrol}. The PPO is implemented as:
\begin{equation*}
\begin{split}
    &\dot{\hat{e}}_1 = \hat{e}_{2} + \frac{\gamma_1 r(t)}{r_\infty}  T \left( \frac{e_1(t)-\hat{e}_1(t)}{r(t)} \right),~\hat{e}_1(0)=0 \\
    &\dot{\hat{e}}_2 = \hat{e}_{3} + \frac{\gamma_2 r(t)}{r_\infty^2}  T \left( \frac{e_1(t)-\hat{e}_1(t)}{r(t)} \right),~\hat{e}_2(0)=0 \\
    &\dot{\hat{e}}_3 = \hat{e}_{4} + \frac{\gamma_3 r(t)}{r_\infty^3}  T \left( \frac{e_1(t)-\hat{e}_1(t)}{r(t)} \right),~\hat{e}_3(0)=0 \\
    &\dot{\hat{e}}_4 =  \frac{\gamma_4 r(t)}{r^4_\infty}T\left( \frac{e_1(t)-\hat{e}_1(t)}{r(t)} \right) , ~\hat{e}_4(0)=0 \\ 
\end{split}
\end{equation*}
with $\gamma_1 =6,\gamma_2=11,\gamma_3=6,\gamma_4=1, r(0)=4,r_{\infty}= 0.002, l_o=10$. The tracking error surface is defined as:
\begin{align*}
    \hat{s} = \rho_u(t)\operatorname{sat}_{1-\xi_s}
\left(\frac{\delta(\hat{e}_4+\omega_3\hat{e}_3+\omega_2\hat{e}_2+\omega_1\hat{e}_1) }{\rho_u(t)}\right)
\end{align*}
with $\delta=0.2$, $\xi_s=0.01$ as well as  $\lambda_1=1.6$, $\lambda_2=1.875$,
and $\lambda_3=2$, yielding
$\omega_1=6$, $\omega_2=9.95$, and $\omega_3=5.475$.
Thus, $l_{u_0}=1.5<\min_i\lambda_i=1.6$, as required by
Lemma~\ref{propPPC}. The control law is obtained by:
\begin{align*}
    \dot{u}  = -\mathrm{sat}_{\Bar{r}}\left(k_r \mathcal{T} \left(  \frac{u + \mathrm{sat}_{\Bar{u}}\left( k_u \mathcal{T} \left(  \frac{\hat{s}}{\rho_u} \right)\right)}{\rho_r} \right)\right)  ,~u(0)=0
\end{align*}
with control gains $k_u=3,k_r=10$ and saturation levels $\Bar{u}=10,\Bar{r}=500$. The adaptive performance laws are given by:
\begin{align*}
    & \dot{\rho}_u  = \alpha_{\rho_u}(\hat{s},\rho_u)  ,~\rho_u(0)=30 \\
           & \dot{\rho}_r  = \alpha_{\rho_r}(\hat{s},\rho_u,\rho_r,u) ,~\rho_r(0)=5
\end{align*}
with $\alpha_{\rho_u}(\cdot),\alpha_{\rho_r}(\cdot)$ defined in \eqref{eq:adaptu} and \eqref{eq:adaptr}, respectively. The adaptive law parameters are set as $l_{u_1}=3,l_{u_0}=1.5,\rho_u^{\infty}=0.2,l_{r_1}=0,l_{r_0}=10,\rho_r^{\infty}=0.01$. Thus, the steady-state tracking error bound is $ e_\infty= \frac{\rho_{u}^{\infty}}{\delta \prod_{j=1}^{n-1} \lambda_{j}}  \approx 0.1667$. 

The results of the simulation are presented in Fig. \ref{f1}, which illustrates the performance of the proposed output-feedback controller. In particular, Fig. \ref{f1}(a) depicts the evolution of the tracking error $e_1$ along the adaptive performance funnel (APF), i.e., the time-varying envelope defined by \eqref{Sset}. Fig. \ref{f1}(b) displays the evolution of the error signal $e_u$, capturing the impact of rate constraints on the control input. Moreover, Fig.\ref{f1}(c) depicts the actual control input $u$, which adheres to amplitude and rate constraints, in comparison to the unconstrained ideal control input $u_d$ and its amplitude-limited counterpart $\mathrm{sat}_{\Bar{u}}(u_d)$. Finally, Fig. \ref{f1}(d) presents the observer estimation errors $\Tilde{e}_i = e_i - \hat{e}_i,~i=1,\dots,4$, highlighting the accuracy of the proposed observer, after a short transient period.

\begin{figure*}[thpb]
      \centering    \includegraphics[clip,trim={0.0cm 0.0cm 2.0cm 0.0cm},width=1\textwidth]{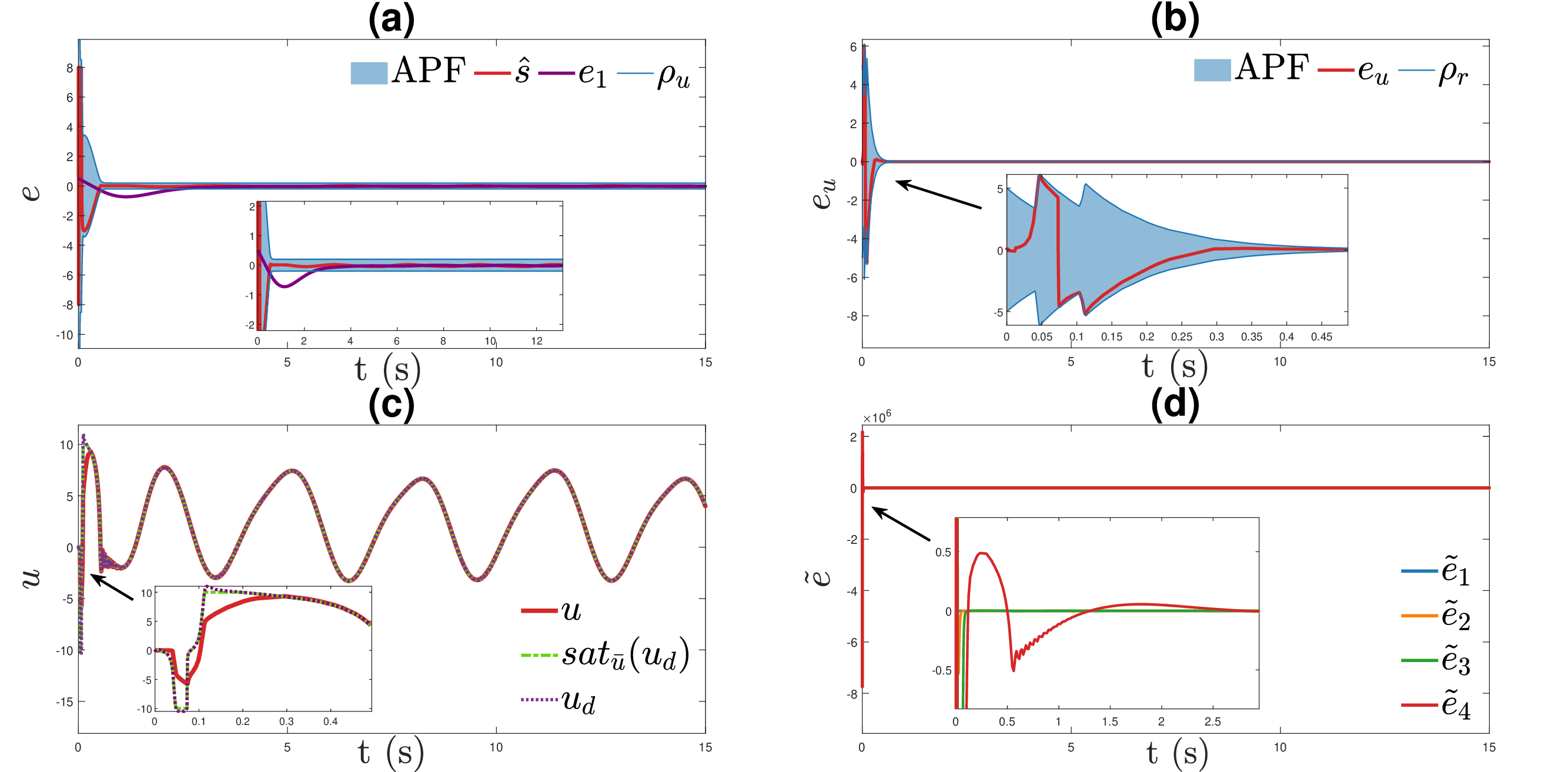}
   \caption{Simulation I: (a) Evolution of $e_1$ along with the adaptive performance funnel (APF) defined by $\rho_u$; (b) evolution of $e_u$ along with the corresponding APF defined by $\rho_r$; (c) evolution of the actual control input $u$ along with the unconstrained ideal control input $u_d$ and the amplitude saturated $\mathrm{sat}_{\Bar{u}}(u_d)$; (d) evolution of estimation errors $\Tilde{e}_i,~i=1,\dots,4$ of the PPO.}
      \label{f1}
\end{figure*}

\subsection{Simulation II: Application to wing rock motion control}
To further validate the proposed control scheme, we apply it to the wing-rock motion of a delta-wing aircraft. Wing rock is a nonlinear oscillatory rolling motion that, if left uncontrolled, can exhibit increasing amplitude over time. To account for state-dependent control gain variations and possible zero dynamics, we modify the mathematical model from \cite{wing1} for an $80^\circ$ delta-wing aircraft as follows:
\begin{align*}
    & \dot{z} = -2z +x_1^3 \\
    & \dot{x}_1  = x_2 \\
    & \dot{x}_2  = f(x,d) + (2+ \cos{(zx_1)})u \\ 
    & y= x_1
\end{align*}
with: 
\begin{align*}
  f(x,d) = &  - 0.008x_1   -0.03 x_2  - 0.4 \lvert x_2 \rvert x_2\\ &-0.01 x_1^3 - 0.06 x_1^2 x_2 + d
\end{align*}
where $x_1$ denotes the roll angle in rad, $u$ is the constrained control input and $d(t) = 0.25 \cos{(6t)}$ models the external disturbances. The reference trajectory for the roll angle is set to $y_r(t) = 0.5\sin(2t)$. 

The simulation study is conducted under two distinct scenarios to demonstrate: i) state-feedback performance recovery under different observer parameters and input constraints, compared to the method in \cite{dimanidis} and ii) the output-feedback control performance for varying initial conditions. 
For both controllers, the performance parameters are selected as
$\rho_u(0)=5$, $\rho_u^\infty=0.05$, and $l_{u_0}=1$.
For the proposed controller, $\rho_u(t)$ evolves according to
\eqref{eq:outadaptu}, whereas the corresponding nominal
exponential boundary is
$(\rho_u(0)-\rho_u^\infty)e^{-l_{u_0}t}+\rho_u^\infty$. The control gain $k_u=1$ and the parameters $\delta=1,\omega_1=2,\gamma_1=4,\gamma_2=2$ are set identical for both the proposed controller and the one presented in \cite{dimanidis}. The rest of the parameters for the proposed controller are selected as $\xi_s= 0.01,l_{u_1}=2,\rho_r(0)=5,\rho_r^{\infty}=0.001,l_{r_0}=1,l_{r_1}=2,\Bar{r}=    25,r(0)=2,l_o=2$.

\textbf{\textit{Scenario A1: Output-feedback recovery of state-feedback performance under varying observer gains.}}

In this simulation scenario, we showcase the ability of the proposed output feedback controller to recover the state-feedback performance as the observer parameter $r_{\infty}$ (or $\epsilon$ for the HGO in \cite{dimanidis}) decreases, for a fixed input saturation level $\Bar{u}=2$. As illustrated in Fig. \ref{f2} the proposed controller ensures accurate tracking in each case, while the controller of \cite{dimanidis} requires higher observer gains to effectively track the output as the estimation error is amplified at steady-state. Specifically, Fig. \ref{f2}(a),(b),(c) depict the evolution of $e_1$, $\Tilde{e}_2$ and $u$ under the proposed control scheme, demonstrating effective tracking and boundedness of all closed-loop signals. Fig. \ref{f2}(d),(e),(f) show the corresponding signals under the controller in \cite{dimanidis}, where small observer gains lead to increased estimation errors, ultimately leading to unbounded closed-loop signals. The standard HGO in \cite{dimanidis}, scales the observer gain up to $\frac{1}{\epsilon^2}$. Although the steady coefficient multiplying $T\left( \frac{e_1(t)-\hat{e}_1(t)}{r(t)} \right)$ is
$O(r_\infty^{-1})$ for the PPO, the local
measurement-error differential gain remains
$O(r_\infty^{-2})$ as discussed in Section \ref{gainAmp}.
\begin{figure*}[thpb]
      \centering    \includegraphics[clip,trim={3.0cm 0.2cm 3.5cm 0.4cm},width=1\textwidth]{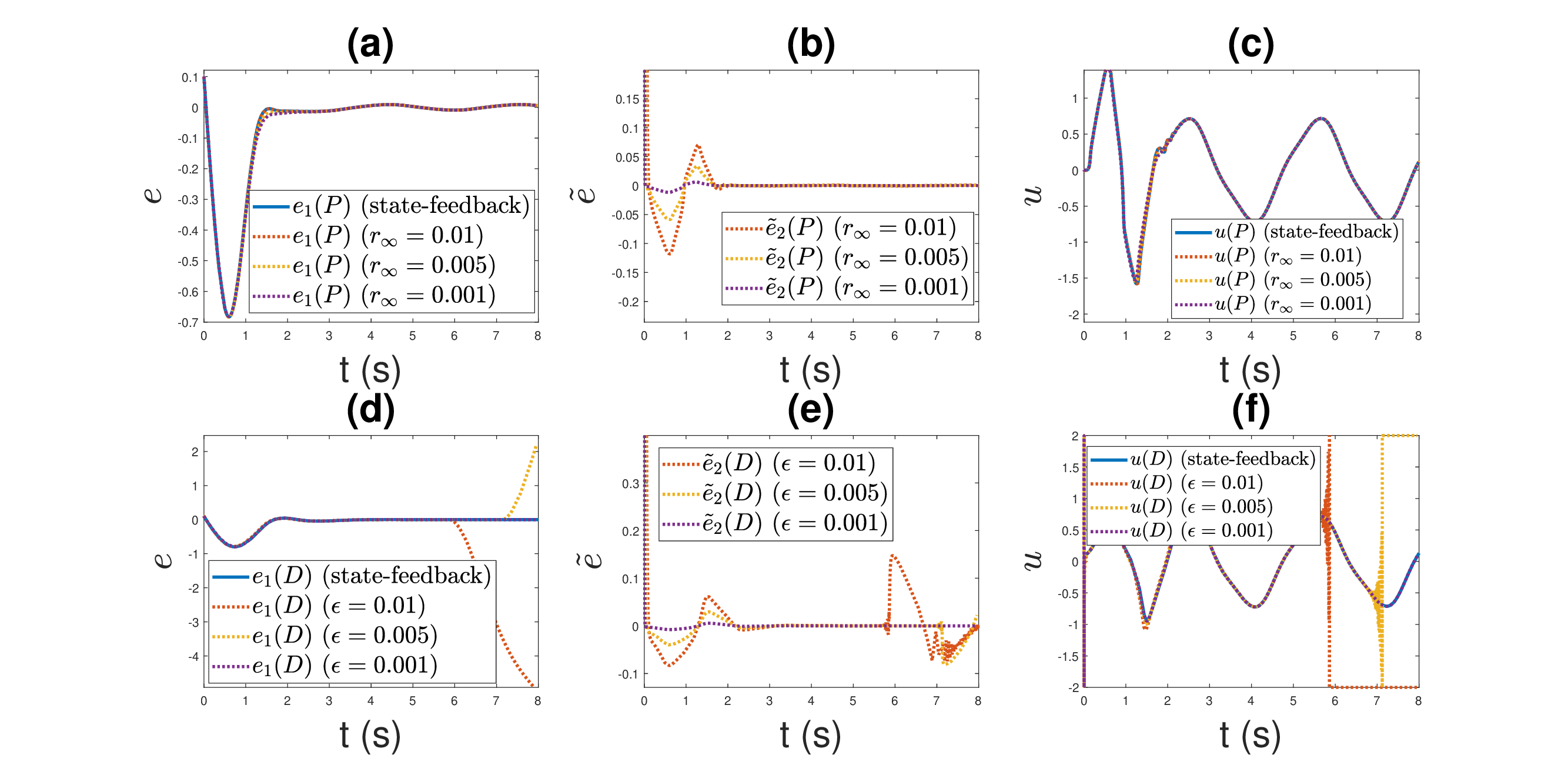}
   \caption{Simulation II - Scenario A1: $e_1(P),\Tilde{e}_2(P),u(P)$ refer to the signals under the proposed controller; $e_1(D),\Tilde{e}_2(D),u(D)$ refer to the signals under the controller in \cite{dimanidis}.}
      \label{f2}
\end{figure*}

\textbf{\textit{Scenario A2: Robustness comparison under heavier input saturation.}}

To further assess robustness, we reduce the input saturation level to $\Bar{u}=0.7$ to show the ability of the proposed controller to enlarge the region of attraction of the closed-loop system, compared to \cite{dimanidis}. As shown in Fig. \ref{f3}, the proposed output-feedback (with $r_{\infty}=0.01$) and state-feedback controllers successfully achieve trajectory tracking retaining the prescribed performance attributes at steady-state. In contrast, the state-feedback controller from \cite{dimanidis} exhibits an internal instability leading to unbounded system output. 
\begin{figure}[thpb]
      \centering    \includegraphics[clip,trim={0.5cm 0.2cm 1.5cm 0.6cm},width=1\columnwidth]{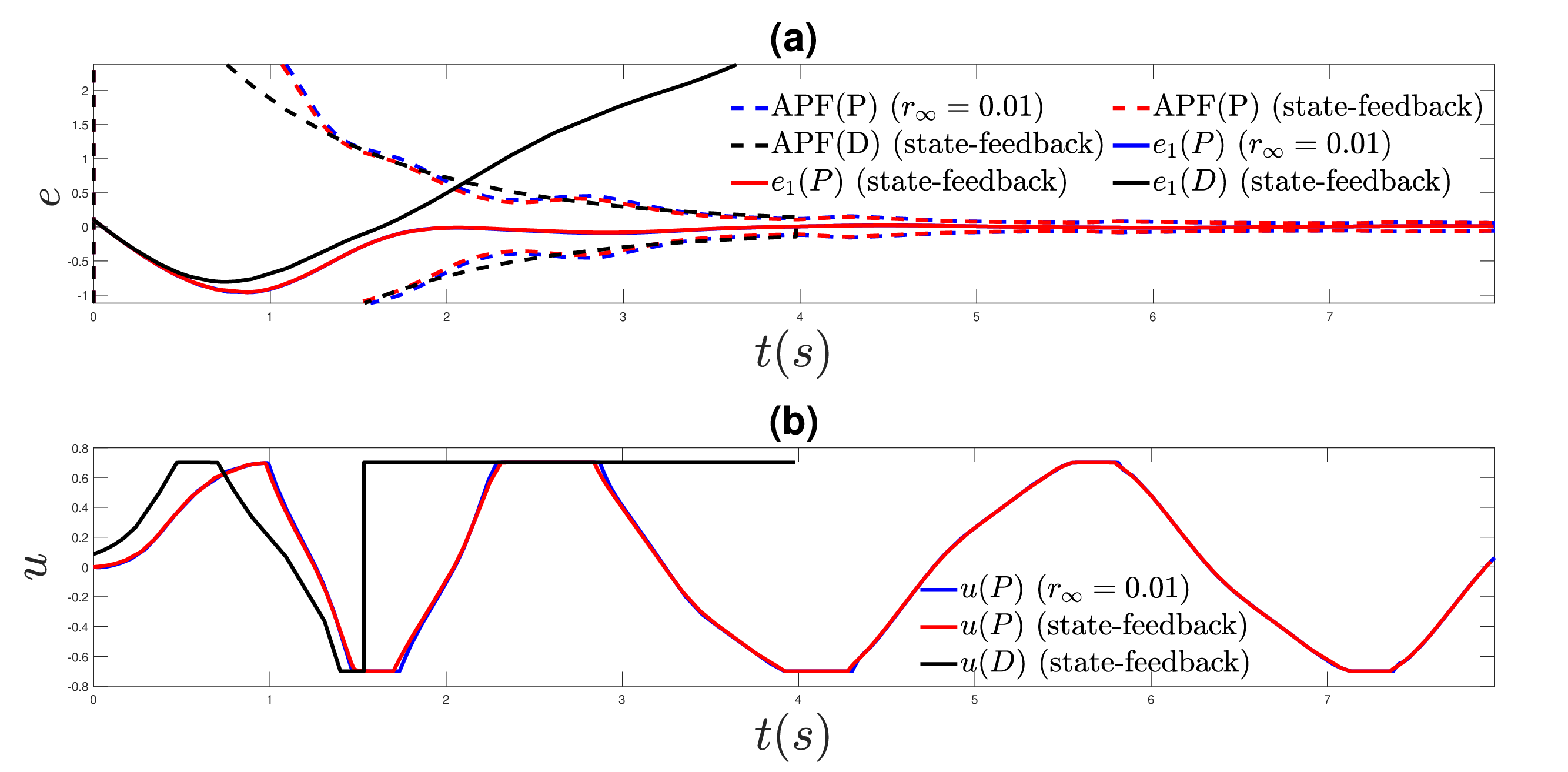}
   \caption{Simulation II - Scenario A2: (a) Evolution of e within the APF; (b) evolution of u, under the proposed output and state-feedback controller, compared to the state-feedback controller of \cite{dimanidis}.}
      \label{f3}
\end{figure}
As a consequence, the simulation results confirm the superior performance and robustness of the proposed adaptive output-feedback controller in recovering state-feedback performance even under limited actuation.

    \textbf{\textit{Scenario B: Output-feedback control for different initial conditions.}}
    In this simulation scenario, we test the performance of the proposed output-feedback controller to ensure trajectory tracking with adaptive performance for different initial conditions, in both position and velocity. Fig. \ref{f4}(a) illustrates the evolution of the tracking error $e_1$ for different initial conditions. Regardless of the initial state, the proposed controller successfully regulates the system toward the desired trajectory while respecting the adaptive performance specifications. Fig. \ref{f4}(b) shows the corresponding control input $u$. The input constraints are strictly satisfied across all cases, and the control signals converge to a common behavior after an initial transient phase. The transient response is influenced by the trade-off between input constraints and prescribed performance specifications, i.e. a more aggressive transient may be observed when higher control effort is available, allowing faster convergence. When input limitations are more restrictive, the system still adheres to the prescribed performance specifications at steady-state, but with a more gradual convergence.
    \begin{figure}[thpb]
          \centering    \includegraphics[clip,trim={3.0cm 0.2cm 3.5cm 0.6cm},width=1\columnwidth]{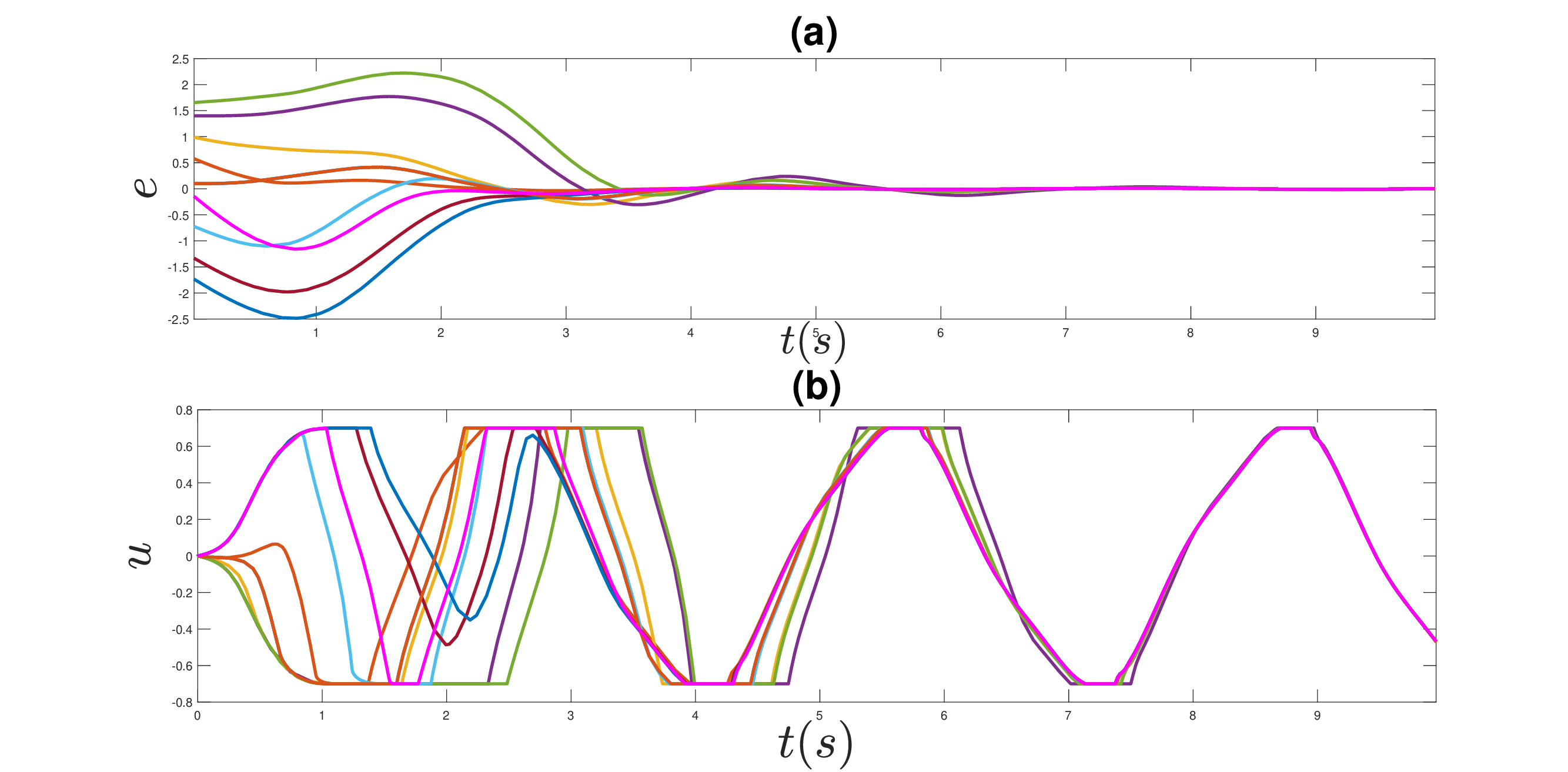}
       \caption{Simulation II - Scenario B:(a) Evolution of $e$ and; (b) evolution of $u$ under different initial conditions.}
          \label{f4}
    \end{figure}
\subsection{Simulation III: Comparative Study}
In this subsection, we conduct a comparative simulation study to evaluate the performance of the proposed control scheme against the approaches described in \cite{berger,funnel}. The study is based on a ship model described by \cite{c5}:
\begin{align}
\ddot{y} + \Phi \dot{y} + b_0 (M\dot{y} + Ly) = b_0 u
\end{align}
where \(y\) denotes the course angular velocity, \(u\) is the rudder angle, and \(b_0\), \(\Phi\), \(M\), and \(L\) are unknown constants associated with hydrodynamic coefficients and the ship's mass. For the simulation, we set the parameters as: \( \Phi = 0.2 \), \( b_0 = 1.85 \), \( M = 0.12 \), and \( L = 0.28 \) with the initial conditions $ y(0)=1.2,~\dot{y}(0)=1.5$. The control input is subject to amplitude constraints with a limit of $\Bar{u} = 2$. The desired trajectory is given by: $y_r(t) = 0.5\cos(t)$. The prescribed performance specifications are selected to enforce a steady-state tracking error of at most $e_{\infty}=0.1$ with a minimum convergence rate of $\exp{(-2t)}$. The parameters of the proposed controller are selected identical to Simulation II with $r_{\infty}=0.01$. For a fair comparison, the parameters of the competing controllers were fine-tuned as follows: $q_1=q_2=p_1=1,p_2=1/3,\Tilde{\gamma
}=1.25$ as defined in \cite{berger} and $M_k=2,\Bar{\sigma}_1=1.3,\Bar{\sigma}_2=0.3,\mu=0.1,k_2=1.1,\epsilon=0.01$ as defined in \cite{funnel}. For clarity, we denote the signals corresponding to the proposed scheme with (P), the controller in \cite{berger} with (B) and the controller in \cite{funnel} with (C). The simulation results are presented in Fig. \ref{f5}, illustrating: (a) the evolution of the tracking error $e_1$ for each control method and (b) the control input $u$ applied under each approach. The proposed controller achieves faster convergence and smaller steady-state errors compared to both benchmark controllers. The control effort remains comparable across all approaches. However, unlike the competing controllers, the proposed method does not require prior knowledge of the input gain or the higher derivatives of $y_r$. In contrast, the approach in \cite{funnel} assumes exact knowledge of $b_0$ is exactly known, while the method in \cite{berger} relies on the parameter $\Tilde{\gamma}$ being sufficiently close $b_0$ for effective control. 

\begin{figure}[thpb]
      \centering    \includegraphics[clip,trim={3.0cm 0.2cm 3.5cm 0.6cm},width=1\columnwidth]{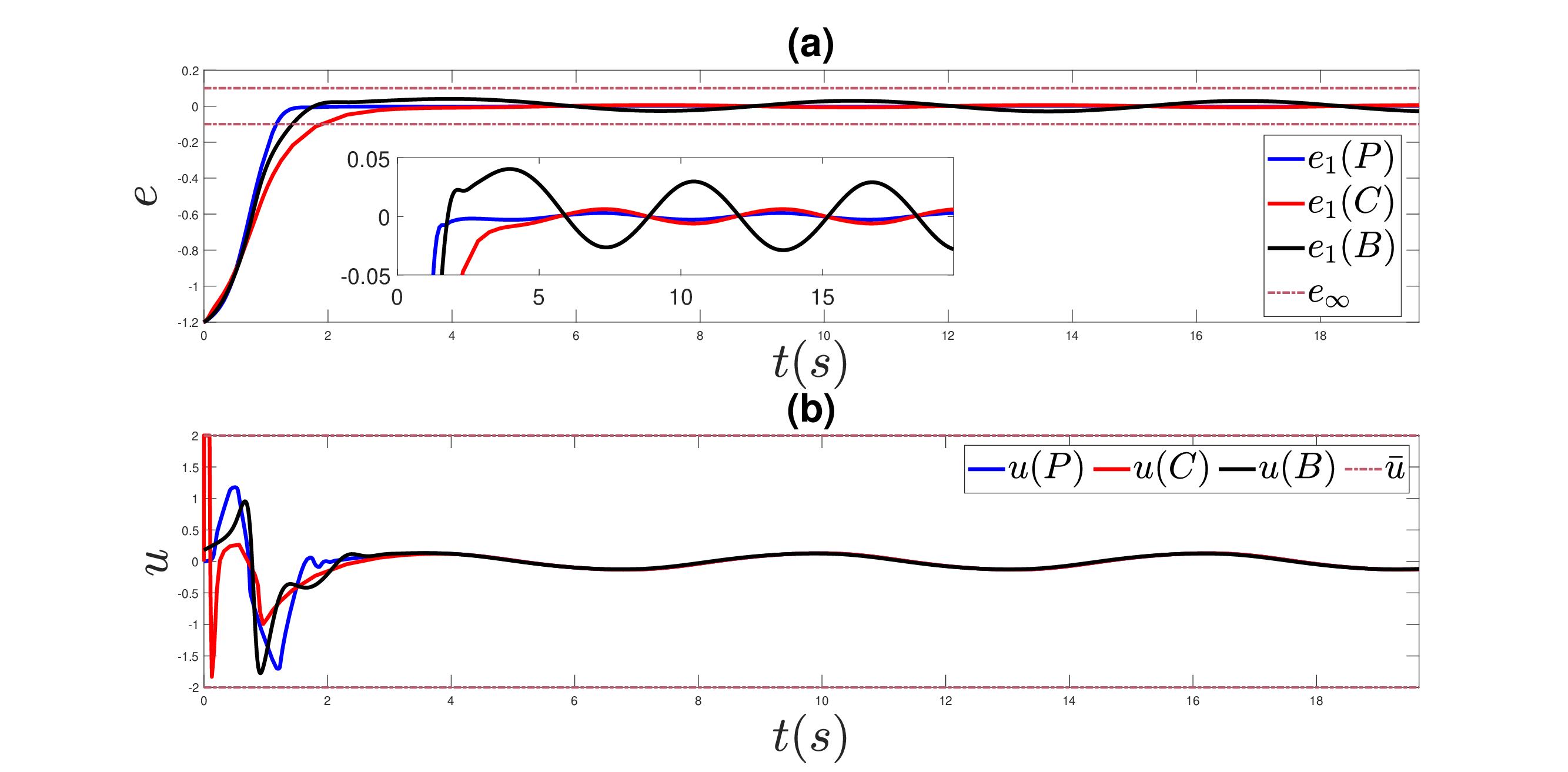}
   \caption{Simulation III: (a) Evolution of the tracking error $e_1$ under the proposed controller (P), \cite{berger} (B), and \cite{funnel} (C); (b) evolution of the control input $u$ for each method.}
      \label{f5}
\end{figure}

\textbf{\textit{Impact of Measurement Noise:}}
To further examine the robustness of the proposed control scheme, we extend the aforementioned study under the influence of measurement noise. The noise is generated by interpolating a vector of random numbers following normal distribution within the range $[-0.001, 0.001]$, with a sampling rate of $100~Hz$.

From Fig. \ref{fn}(a), it is evident that all three controllers successfully track the desired trajectory despite the presence of measurement noise, with the proposed controller exhibiting a slightly faster convergence rate. Examining Fig. \ref{fn}(b)-(d), we observe a significant difference in the smoothness of the control input. The proposed rate-limited controller (P) produces a significantly smoother control signal compared to the alternatives. In contrast, both the controllers from \cite{berger} and \cite{funnel} exhibit chattering effects induced by noise amplification. The estimation error $\Tilde{e}_2$, depicted in Fig. \ref{fn}(e)-(g), is comparable across all schemes. However, the controller from \cite{berger} demonstrates a lower estimation error.
Table~\ref{tab:metrics_all} summarizes standard performance indices for both noise-free and noisy scenarios. The settling time required for $|e_1(t)|$ to enter and remain within the prescribed bound $|e_\infty|$ is denoted by $T_s$, while $T$ denotes the duration of the simulation horizon used for computing the integral performance metrics.
\begin{table}[t]
\centering
\caption{Quantitative performance comparison}
\label{tab:metrics_all}
\setlength{\tabcolsep}{3pt}
\renewcommand{\arraystretch}{1.1}
\begin{tabular}{lcccccc}
\toprule
& \multicolumn{3}{c}{\textbf{Noise-free}}
& \multicolumn{3}{c}{\textbf{Noisy measurements}} \\
\cmidrule(lr){2-4}\cmidrule(lr){5-7}
\textbf{Metric}
& \textbf{(P)} & \textbf{(B)} & \textbf{(C)}
& \textbf{(P)} & \textbf{(B)} & \textbf{(C)} \\
\midrule
Settling time $T_s$ [s]
& \textbf{1.167} & 1.427 & 1.903
& \textbf{1.152} & 1.427 & 1.874 \\

$\frac{1}{T}\int_0^T |e_1(t)|\,dt$
& \textbf{0.046} & 0.067 & 0.062
& \textbf{0.091} & 0.115 & 0.118 \\

$\int_0^T |u(t)|\,dt$
& 2.864 & 2.730 & \textbf{2.557}
& 2.846 & \textbf{1.936} & 2.839 \\

$\int_0^T |\dot u(t)|\,dt$
& 7.540 & \textbf{6.893} & 11.085
& \textbf{25.715} & 64.794 & 334.784 \\
\bottomrule
\end{tabular}
\end{table} 
Consequently, the simulation results highlight the robustness of the proposed control approach. Despite operating without prior knowledge of system parameters, it achieves comparable or superior performance while mitigating the adverse effects of measurement noise, owing to the rate constraints. The smoother control action reduces the risk of actuator failure making it a more practical choice for real-world applications. However, further investigation is required to reduce the noise sensitivity of the PPO.

    \begin{figure*}[thpb]
          \centering    \includegraphics[clip,trim={2.0cm 0.2cm 2.5cm 0.6cm},width=1\textwidth]{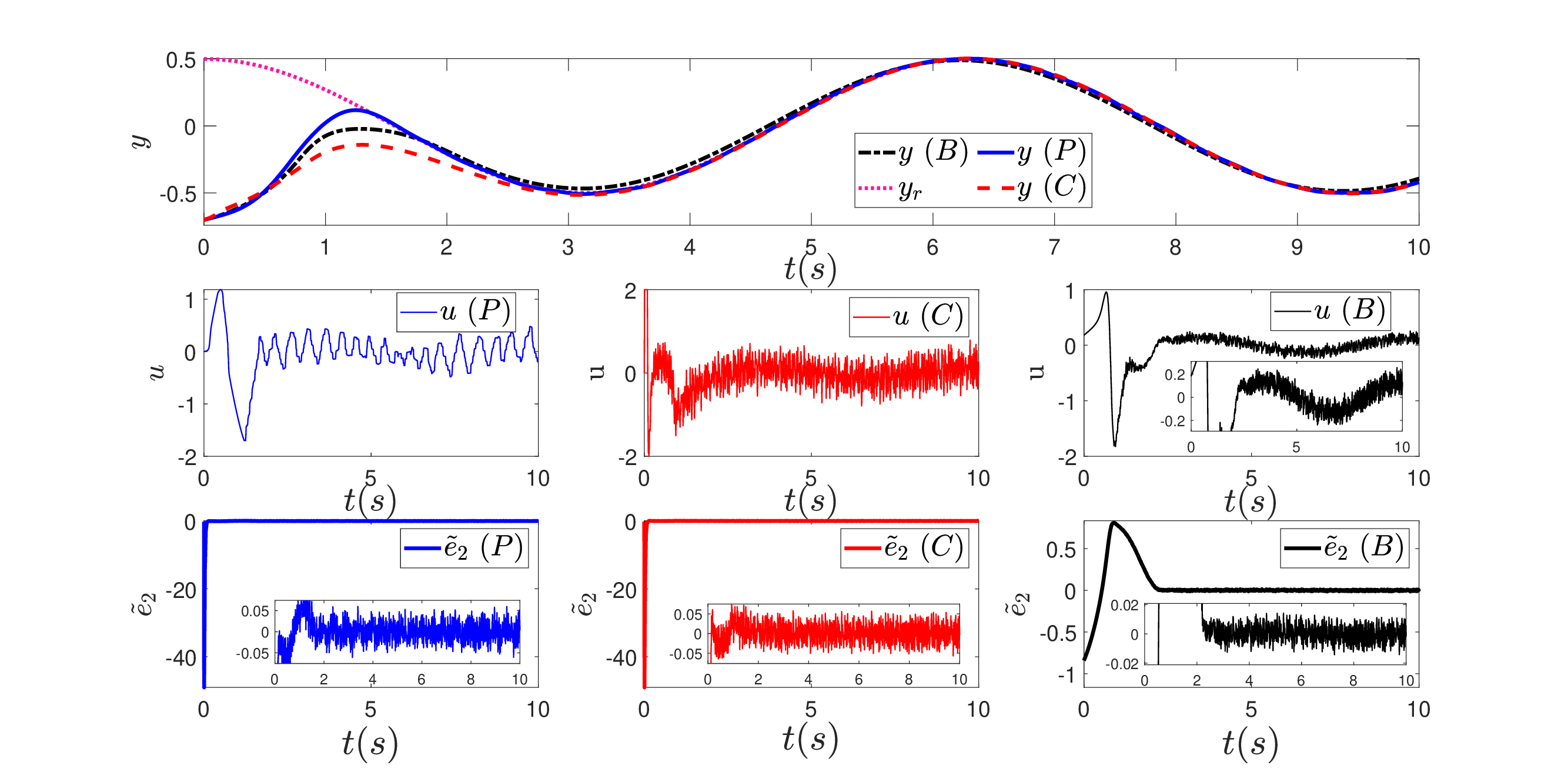}
       \caption{Simulation III - Impact of measurement noise: (a) Output response $y$ alongside the reference trajectory $y_r$; (b)-(d) control input $u$ corresponding to the proposed controller (P), \cite{berger} (B), and \cite{funnel} (C); (e)-(g) estimation error $\Tilde{e}_2$ for each method.}
          \label{fn}
    \end{figure*}

\section{Conclusions}\label{conclusio}
In this work, we developed a robust output-feedback control framework imposing adaptive output performance specifications suitable for uncertain high-order nonlinear systems in the normal form under amplitude and rate input constraints. A novel prescribed performance observer was developed to provide accurate state estimation with nonlinear gains depending on the estimation error. The integration of the PPO into an adaptive performance control scheme ensured bounded closed-loop signals and adaptive performance attributes based on the prescribed performance specifications and their conflict with the input limitations. The proposed framework is developed without leveraging explicit system dynamics or disturbance information, which enhances robustness and simplicity but introduce conservatism. Future work will focus on
incorporating partial model knowledge and data-driven approaches while
preserving robustness, as well as on reducing
the noise sensitivity of the PPO, as highlighted by the simulation results.

\bibliographystyle{IEEEtran}
\bibliography{IEEEabrv,references}
\begin{IEEEbiography}[{\includegraphics[width=1.0in,height=5.95in,clip,keepaspectratio]{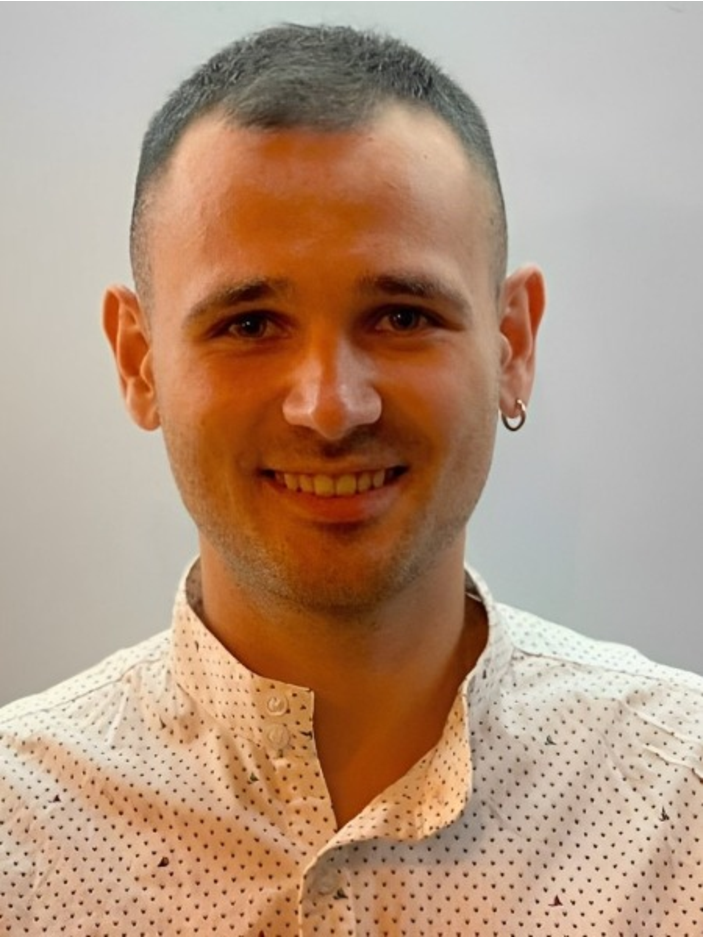}}]
{Panagiotis S. Trakas} was born in Tripoli, Greece, in 1998. He received the Diploma in Electrical and Computer Engineering from the Aristotle University of Thessaloniki, Thessaloniki, Greece, in 2021 and the Ph.D. degree in Electrical and Computer Engineering from the University of Patras, Patras, Greece in 2025. 

Between 2023 and 2024, he was a Visiting Scholar at the Division of Signals and Systems, Uppsala University, Uppsala, Sweden, and the Robot Perception and Learning (RPL) Lab at the University College London (UCL), London, U.K. He is currently a Postdoctoral Researcher at the Automatic Control Laboratory, Department of Information Technology and Electrical Engineering, ETH Zurich, Switzerland. His research interests include nonlinear robust adaptive control, autonomous systems, and multiagent systems.
\end{IEEEbiography}

\begin{IEEEbiography}[{\includegraphics[width=1in,height=1.25in,clip,keepaspectratio]{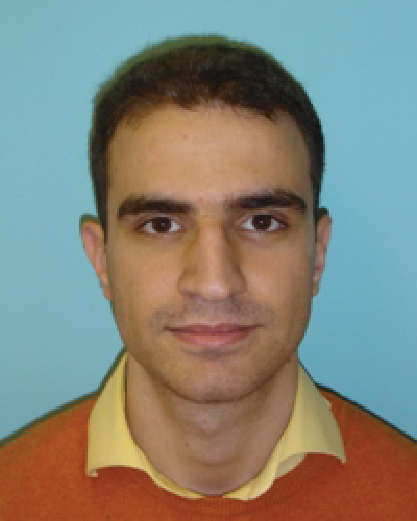}}]
{Charalampos P. Bechlioulis}(Senior Member, IEEE) was born in Arta, Greece, in 1983. He received the diploma degree (first in his class) in electrical and computer engineering (2006), the bachelor of science degree (second in his class) in mathematics (2011), and the Ph.D. degree in electrical and computer engineering (2011) from the Aristotle University of Thessaloniki, Thessaloniki, Greece.

He is currently a Professor with the Division of Systems and Control, Department of Electrical and Computer Engineering, University of Patras, Patras, Greece. He has authored more than 140 papers in scientific journals and conference proceedings and three book chapters. His research interests include nonlinear control with prescribed performance, system identification, control of robotic vehicles, and multiagent systems.
\end{IEEEbiography}
\end{document}

%% file: BD.tex
\tikzstyle{block} = [
  draw, fill=blue!5, rectangle,
  minimum height=3.6em,
  minimum width=3.6em,
  font=\large
]

\tikzstyle{block1} = [
  draw, fill=red!5, rectangle,
  minimum height=3.6em,
  minimum width=7.2em,
  font=\large
]

\tikzstyle{block2} = [
  draw, fill=green!5, rectangle,
  minimum height=3.6em,
  minimum width=7.2em,
  font=\large
]

\tikzstyle{sat} = [
  draw, fill=blue!5, rectangle,
  minimum height=2.6em,
  minimum width=3.1em,
  text width=1.4em,
  align=center,
  font=\large
]

\tikzstyle{sum}    = [draw, circle, node distance=1cm]
\tikzstyle{input}  = [coordinate]
\tikzstyle{output} = [coordinate]

\def\windup{
\tikz[remember picture,overlay]{
  \draw (-0.5,0) -- (0.5,0)  (0,-0.4)--(0,0.4);
  \draw (-0.53,-0.3)--(-0.3,-0.3) -- (0.3,0.3) --(0.53,0.3);
}}

\begin{tikzpicture}[auto, node distance=2cm, >=latex',scale=0.85,
  transform shape]

\node [input] (input) {};
\node [sum, right of=input] (sum) {};

\node [block1, right of=sum, node distance=2cm] (obs)
  {$\alpha_{\hat{e}}(e,\hat{e},r)$};

\node [block, right of=obs, node distance=3cm] (ud)
  {$u_d(\hat{s},\rho_u)$};

\node [sat, right of=ud, node distance=2.25cm] (u) {};
\node at (u) {\windup};

\node [block, right of=u, node distance=2.55cm] (ur)
  {$u_r(\hat{s},\rho_u,\rho_r,u)$};

\node [sat, right of=ur, node distance=2.7cm] (udot) {};
\node at (udot) {\windup};

\node [block, right of=udot, node distance=1.7cm] (ua) {$\int$};

\node [block2, right of=ua, node distance=2.5cm] (plant) {Plant};

\node [input, above of=plant] (d) {$d(t)$};
\node [output, right of=plant] (output) {};

\node [block, below of=ud, node distance=3cm, xshift=1.5cm] (rho)
  {$\alpha_{\rho_u}(\hat{s},\rho_u)$};

\node [sum, above of=rho, node distance=1.5cm] (sum1) {};

\node [block, below of=ur, node distance=3cm, xshift=1.8cm] (rhor)
  {$\alpha_{\rho_r}(\hat{s},\rho_u,\rho_r,u)$};

\node [sum, above of=rhor, node distance=1.5cm] (sum2) {};

\node [block2, below of=ur, node distance=5.2cm, xshift=-0.5cm] (sensors)
  {Measurements};

\draw[->] (input) -- node {$y_r -$} (sum);
\draw[->] (sum) -- node {$e_1$} (obs);
\draw[->] (obs) -- node {$\hat{e}$} (ud);

\draw[->] (ud) -- (u);
\draw[->] (u) -- (ur);
\draw[->] (ur) -- (udot);
\draw[->] (udot) -- node {$\dot{u}$} (ua);
\draw[->] (ua) -- node {$u$} (plant);

\draw[->] (plant) -- (output);
\draw[->] (d) -- node {$d$} (plant);

\draw[->] (u) |- node[pos=0.9] {$+$} (sum1);
\draw[->] (ud) -| node[pos=0.9] {$-$} (sum1);
\draw[->] (sum1) -- (rho);
\draw[->] (rho) -| node {$\rho_u$} (ud);

\draw[->] (udot) |- node[pos=0.9] {$+$} (sum2);
\draw[->] (ur) -| node[pos=0.9] {$-$} (sum2);
\draw[->] (sum2) -- (rhor);
\draw[->] (rhor) -| node {$(\rho_r,u)$} (ur);

\draw[->] (ua) |- node {$u$} (rhor);

\draw[->] (plant) |- (sensors);
\draw[->] (sensors) -| node[pos=0.95] {$+$} node[near end] {$y$} (sum);

\node (Y) [draw=blue, fit=(u)(ud)(rho)(ur)(udot)(rhor)(ua),
  inner sep=0.1cm, dashed, thick] {};
\node [yshift=2ex, blue] at (Y.north) {\textbf{Controller}};

\node (Y1) [draw=red, fit=(obs),
  inner sep=0.1cm, dashed, thick] {};
\node [yshift=2ex, red] at (Y1.north) {\textbf{Observer}};

\end{tikzpicture}